\title{On the packing chromatic number of subcubic outerplanar graphs}
\author{Nicolas Gastineau$^{1}$,  P\v remysl Holub$^{2}$ and Olivier Togni$^{3}$}
\date{\today}
\newtheorem{remark}{Remark}
\newcounter{mathitem}
\newenvironment{mathitem}
  {\begin{list}{{$(\roman{mathitem})$}}{
   \setcounter{mathitem}{0}
   \usecounter{mathitem}
   \setlength{\topsep}{0pt plus 2pt minus 0pt}
   \setlength{\parskip}{0pt plus 2pt minus 0pt}
   \setlength{\partopsep}{0pt plus 2pt minus 0pt}
   \setlength{\parsep}{0pt plus 2pt minus 0pt}
   \setlength{\leftmargin}{20pt}
   \setlength{\itemsep}{0pt plus 2pt minus 0pt}}}
  {\end{list}}
\newcounter{prostredi}
\def\theprostredi{\arabic{prostredi}}
\newenvironment{theorem}{\par\bigskip\noindent%
\refstepcounter{prostredi}{\bf Theorem \theprostredi.}\quad\bgroup\sl }
{\egroup\par\bigskip\endtrivlist}%
\newenvironment{proposition}{\par\bigskip\noindent%
\refstepcounter{prostredi}{\bf Proposition \theprostredi.}\quad\bgroup\sl }
{\egroup\par\bigskip\endtrivlist}%
\newenvironment{lemma}{\par\bigskip\noindent%
\refstepcounter{prostredi}{\bf Lemma \theprostredi.}\quad\bgroup\sl }
{\egroup\par\bigskip\endtrivlist}%
\def\vejde#1{\unskip
\nobreak\hfill\penalty50\hskip1em\hbox{}\nobreak\hfill
\hbox{#1}}
\newenvironment{proof}{\par
\noindent%
{\bf Proof.}\quad\bgroup}
{\egroup\vejde{\rule{2.5mm}{2.5mm}}\par\bigskip\endtrivlist}%
\egroup\vejde{\rule{2.5mm}{2.5mm}}\par\bigskip\endtrivlist}%
\newenvironment{observation}{\par\bigskip\noindent%
\refstepcounter{prostredi}{\bf Observation
\theprostredi.}\quad\bgroup\sl }
{\egroup\par\bigskip\endtrivlist}%
\newcounter{prostralph}
\def\theprostralph{\Alph{prostralph}}
\newenvironment{propositionA}[1]{\par\bigskip\noindent%
\refstepcounter{prostralph}{\bf Proposition \theprostralph{} {#1}.}\quad\bgroup\sl }
{\egroup\par\bigskip\endtrivlist}%
\newcounter{prostrclaim}
\def\theprostrclaim{\arabic{prostrclaim}}
\egroup\vejde{$\square$}\par\bigskip\endtrivlist}%
\egroup\vejde{$\square$}\par\bigskip\endtrivlist}%
\newcommand{\noi}{\noindent}
\begin{document}

\maketitle

\footnotetext[1]{LAMSADE UMR7243,
\textit{PSL, Univ. Paris-Dauphine, France}; 
e-mail: {\tt nicolas.gastineau@u-bourgogne.fr}}

\footnotetext[2]{Department of Mathematics, University of
West Bohemia;
European Centre of Excellence NTIS -
New Technologies for the Information Society;
P.O. Box 314, 306 14 
Pilsen, Czech Republic; e-mail: {\tt holubpre@kma.zcu.cz}}

\footnotetext[3]{Le2I FRE2005,
\textit{Universit\'e Bourgogne Franche-Comt\'{e}, F-21000 Dijon, France}; \\
e-mail: {\tt olivier.togni@u-bourgogne.fr}}

\begin{abstract}
\noi 

Although it has recently been proved that the packing chromatic number is unbounded on the class of subcubic graphs, there exists subclasses in which the packing chromatic number is finite (and small). These subclasses include subcubic trees, base-3 Sierpiński graphs and hexagonal lattices.
In this paper we are interested in the packing chromatic number of subcubic outerplanar graphs. We provide asymptotic bounds depending on structural properties of the outerplanar graphs and determine sharper bounds for some classes of subcubic outerplanar graphs.


\medskip
\noindent {\bf Keywords:} packing colouring, packing chromatic number, outerplanar graphs, subcubic graphs.
\smallskip

\noindent {\bf AMS Subject Classification: 05C12, 05C15, 05C70} 

\end{abstract}

\section{Introduction}

Throughout this paper, we consider undirected simple graphs only, and for definitions and notations not defined here we refer to \cite{bondy_murty}.

Let $G$ be a graph and $c$ a vertex $k$-colouring of $G$, i.e., a mapping $c: V(G) \rightarrow \{1,2,\dots, k\}$. We say that $c$ is a {\em packing $k$-colouring} of $G$ if vertices coloured with the same colour $i$ have pairwise distance greater than $i$. The {\em packing chromatic number} of $G$, denoted by $\chi_{\rho}(G)$ is the smallest integer $k$ such that $G$ has a packing $k$-colouring; if there is no such integer $k$ then we set $
\chi_{\rho}(G)= \infty$. For a class  of graphs $\mathcal{C}$, we say that the packing chromatic number of $\mathcal{C}$ is finite if there exists a positive integer $k$ such that $\chi_{\rho}(G)\leq k$ for every graph $G\in \mathcal{C}$.


The concept of a packing colouring of a graph, introduced by Goddard et al. in \cite{God} under the name broadcast colouring, is inspired by frequency planning in wireless systems, in which it emphasizes the fact that signals can have different powers, providing a model for the frequency assignment problem.
The packing chromatic number of lattices has been studied by several authors: for the infinite square lattice $\mathbb{Z}^{2}$, Soukal and Holub in \cite{SO2010} proved that
$\chi_{\rho}(\mathbb{Z}^{2})\le 17$, while Ekstein et al. in \cite{EK2010} showed that $12\le\chi_{\rho}(\mathbb{Z}^{2})$. Recently, Martin et al. in \cite{Martsquare} improve the bounds by showing that $13\le \chi_{\rho}(\mathbb{Z}^{2}) \le 15$. For the infinite hexagonal grid $\mathscr{H}$, Fiala et al. in \cite{FI2009} showed that $\chi_{\rho}(\mathscr{H})\le7$, Kor{\v{z}}e and Vesel in \cite{AV2007} proved that $\chi_{\rho}(\mathscr{H})\ge7$. Finbow and Rall in \cite{Fin} proved that the infinite triangular grid $\mathscr{T}$ is not packing colourable, i.e., $\chi_{\rho}(\mathscr{T})=\infty$. 
Moreover, Kor{\v{z}}e and Vesel in \cite{AV2018} proved that the infinite octagonal lattice has packing chromatic number $7$.
The packing chromatic number of the Cartesian product of some graphs was investigated in \cite{BrePa,FI2009,jonck}. 
Also, the packing chromatic number has been studied for further graph classes in \cite {BrePa,BresKlav,God,HolubD,TO2010}.
The computational complexity has been also studied: determining whether a graph has packing chromatic number at most $4$ is an NP-complete problem \cite{God} and determining whether a tree has packing chromatic number at most $k$ (with a tree and $k$ on input) is also an NP-complete problem \cite{FiCo}.

Sloper in \cite{Slo} showed that the infinite complete ternary tree $T$ has $\chi_{\rho}(T)=\infty$ while any subcubic tree $T$ is packing $7$-colourable, hence it is natural to ask if all graphs with maximum degree $3$ (often so-called {\em subcubic graphs}) have finite packing chromatic number. This question was raised by Goddard et al.~\cite{God}. Recently, a second open question has been proposed about the packing chromatic number of $S(G)$, when $G$ is subcubic \cite{BresKlav3,Gasto} ($S(G)$ being the graph obtained from $G$ by subdividing each edge once).
Recently, Balogh, Kostochka and Liu \cite{Bal17} proved that, for any integer $k$, almost all cubic graphs of order $n$ and of girth at least $2k+2$ have packing chromatic number greater than $k$, hence answering negatively the question of Goddard et al.  Moreover, an explicit construction of an infinite family of subcubic graphs with unbounded packing chromatic number have been found very recently \cite{BreFe}.
Some subclasses of subcubic graphs were also under consideration, see e.g. \cite{BrePa,BresKlav}.

Outerplanar graphs form a class of structured graphs (containing the class of trees), which are generally easy to colour. Our aim is to find some  classes of subcubic outerplanar graphs, which have finite packing chromatic number. We define these classes by giving restrictions on their structure (number of faces of different types), or, equivalently, on their weak dual. Note that, when a graph is not connected, we can colour each component separately satisfying the distance constraints of a packing colouring and the resulting colouring is packing as well. Thus, throughout the rest of this paper, we will consider connected outerplanar graphs only.

The paper is organized as follows. Section 2 presents an upper bound for $2$-connected subcubic outerplanar graphs without internal face, i.e., for which the weak dual is a path. Then, in Section 3, we use results from Section 2 in order to determine asymptotic bounds for some larger classes of subcubic outerplanar graphs restricted by the number of (internal) faces. In Section 4, we improve bounds from Section 3 for some specific classes of subcubic outerplanar graphs with a specific structure. Finally, in the last section, we present lower bounds for the packing chromatic number of subcubic outerplanar graphs and give concluding remarks. Table \ref{table1} summarizes the main results of this paper. 

\begin{table}[ht]
\begin{center}
\begin{tabular}{|c|c|c|}
  \hline
 Condition on the subcubic outerplanar graph $G$ & $\ell$ & Section \\ \hline
 $G$ is $2$-connected with no internal face & $15$ & 2  \\ \hline
 $G$ is $2$-connected with at most $k$ internal faces & $17 \times 6^{3k}-2$ & 3 \\ \hline
 $G$ is connected with at most $k'$ faces & $ 9 \times 6^{k'}-2$ & 3 \\ \hline
 $G$ is $2$-connected with one internal face & 51 & 4 \\ \hline
 $G$ is connected with no internal face and with the block graph a path & 305 & 4 \\ \hline
\end{tabular}
\end{center}
\caption{Classes of subcubic outerplanar graphs and values of $\ell$ for which every relevant graph $G$ satisfies $\chi_{\rho}(G)\le \ell$.}
\label{table1}
\end{table}

\subsection{Preliminaries}

Let $G$ be a graph and $A\subset V(G)$. We denote $G-A$ the subgraph of $G$ after deletion of all vertices of $A$ from $G$ and all edges incident to some vertex of $A$ in $G$. We further denote $G[A]$ the subgraph of $G$ induced by $A$, or equivalently, $G[A]=G-(V(G)\setminus A)$. Specifically, for $x\in V(G)$, $G-x$ denotes the subgraph of $G$ after deletion of $x$ and all edges incident to $x$ from $G$. 

An {\em outerplanar graph} $G$ is a planar graph such that there exists a planar drawing of $G$ for which all vertices belong to the outer face. When it is $2$-connected, it can be represented by a {\em boundary cycle} $C$ containing all vertices of $G$, with non-crossing chords dividing the interior of $C$ into {\em faces}. A face $F$ of $G$ is called an {\em internal face} if $F$ contains more than two chords of $G$, and an {\em end face} of $G$ if $F$ contains only one chord of $G$; note that all remaining edges of an end face belong to $C$.

The {\em weak dual} of $G$, denoted by $\mathcal{T}_G$, is the graph with the set of all faces of $G$, except the outer face, as vertex set, and the edge set $E(G)=\left\{ FF' \vert \, F \mbox{ and }F' \mbox{ have an edge in common}\right\}$. We denote by $u_F$ the vertex of $\mathcal{T}_G$ corresponding to the face $F$ of $G$ and sometimes we identify a face $F$ and the corresponding vertex $u_F$ of $\mathcal{T}_G$. It is well known that the weak dual of a connected outerplanar graph is a forest and of a $2$-connected outerplanar graph is a tree.
Note that an end face of an outerplanar graph $G$ corresponds to a leaf of $\mathcal{T}_G$ and that an internal face of $G$ corresponds to a vertex of degree at least $3$ in $\mathcal{T}_G$.
Obviously, every end face of a $2$-connected outerplanar graph contains at least one vertex of degree 2.

For a graph $G$, the {\em block graph} of $G$, denoted by $\mathcal{B}_G$, is the graph where vertices of $\mathcal{B}_G$ represent all maximal 2-connected subgraphs of $G$ (usually called {\em blocks}) and two vertices of $\mathcal{B}_G$ are adjacent whenever the corresponding blocks share a cut vertex.

For any $G_1\subset G$, let $N(G_1)=\{u\in V(G)| \ uv \in E(G) \mbox{ for some } v\in V(G_1)\}$ be the {\em neighbourhood} of $G_1$ in $G$. Specifically, if $G_1=\{v\}$, let $N(v)$ denote the neighbourhood of $v$ in $G$. For $X,Y\subseteq V(G)$, a {\em shortest ($X$,$Y$)-path} is a shortest path in $G$ between some vertex of $X$ and some vertex of $Y$. If $X$ contains a vertex $u$ only, then we write $u$ instead of $\{u\}$. Let $d_G(u,v)$ denote the {\em distance between two vertices} $u$ and $v$ in $G$, i.e., the length of a shortest $(u,v)$-path. Analogously, $d_G(X,Y)$ denote the {\em distance between $X$ and $Y$}, i.e., the length of a shortest $(X,Y)$-path in $G$.

Let $P_{\infty}$ denote the two-way infinite path, i.e., $V(P_{\infty})=\mathbb{Z}$ and $E(P_{\infty})=\{i\ i+1|\ i\in\mathbb{Z}\}$ and let $P^{+}_{\infty}$ denote the one-way infinite path, i.e., $V(P^{+}_{\infty})=\mathbb{N}$ and $E(P^{+}_{\infty})=\{i\ i+1|\ i\in\mathbb{N}\}$.

In our proofs we will use the following statement presented by Goddard et al. in  \cite{God}.

\begin{propositionA}{\cite{God}} \label{preliminaries}
Let $k$ be a positive integer. Then
\begin{enumerate}
\item[i)] Every cycle has packing chromatic number at most $4$;
\item[ii)] There is a packing colouring of $P_{\infty}$ with colours $\{k,k+1,\dots, \, 3k+2\}$;
\item[iii)]  If $k\ge 34$, then there is a packing colouring of $P_{\infty}$ with colours $\{k,k+1,\dots, \, 3k-1\}$.
\end{enumerate}
\end{propositionA}

\section{$2$-connected subcubic outerplanar graphs with the weak dual a path} 

In the different proofs of this paper, we say that we denote the vertices of a path of order $n$ by $x^1,x^2,\ldots,x^n$ in an ordering starting by $x$ and finishing by $y$, in the case $x^i$ and $x^{i+1}$ are adjacent, for $1\le i\le n-1$, $x^i$ and $x$ denote the same vertex and $x^n$ and $y$ denote the same vertex.

The following observation will be used in order to construct some useful shortest path in $2$-connected subcubic outerplanar graphs. Moreover, it gives a description of $2$-connected outerplanar graphs with the weak dual of these graphs. 

\begin{observation}
Let $G$ be a $2$-connected outerplanar graph that is not a cycle. Then $G$ contains at least two end faces. Moreover, if $G$ contains no internal face, then $G$ has exactly two end faces.
\end{observation}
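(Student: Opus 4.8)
The plan is to transfer the whole question to the weak dual $\mathcal{T}_G$ and argue there, since the excerpt has already recorded the two facts that make this profitable: for a $2$-connected outerplanar graph $G$ the weak dual $\mathcal{T}_G$ is a tree, and under this correspondence an end face of $G$ is exactly a leaf of $\mathcal{T}_G$ while an internal face of $G$ is exactly a vertex of degree at least $3$. So the statement is really a statement about leaves and high-degree vertices of a tree, dressed up in the language of faces.

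First I would check that $\mathcal{T}_G$ has at least two vertices. Because $G$ is $2$-connected and outerplanar, it is represented by its boundary cycle $C$ together with a set of non-crossing chords. If there were no chord at all, then $G=C$ would be a cycle, contrary to hypothesis; hence $C$ carries at least one chord, and a single chord already splits the interior of $C$ into at least two faces. Thus $\mathcal{T}_G$ is a tree on at least two vertices. Now I would invoke the elementary fact that every tree with at least two vertices has at least two leaves (take the two endpoints of a longest path, or argue by the degree-sum/handshake bound). Translating these leaves back through the dictionary, each corresponds to an end face of $G$, which yields the first assertion: $G$ has at least two end faces.

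For the ``moreover'' part, I would add the hypothesis that $G$ has no internal face. By the correspondence this says precisely that $\mathcal{T}_G$ has no vertex of degree at least $3$, i.e.\ $\mathcal{T}_G$ has maximum degree at most $2$. A tree whose maximum degree is at most $2$ is a path, and a path on $n\ge 2$ vertices has exactly two leaves (its two endpoints). Since $\mathcal{T}_G$ has at least two vertices by the paragraph above, it is a path $P_n$ with $n\ge 2$, so it has exactly two leaves, and hence $G$ has exactly two end faces.

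I do not expect a genuine obstacle here; the argument is a short two-step translation. The only points requiring a little care are the two ``bookkeeping'' claims that make the reduction legitimate: that $G$ not being a cycle really forces a chord and therefore $|V(\mathcal{T}_G)|\ge 2$ (so the tree-leaf lemma applies and does not degenerate), and that a face carrying exactly two chords is \emph{neither} an end face \emph{nor} an internal face, so that the absence of internal faces genuinely caps the maximum degree of $\mathcal{T}_G$ at $2$ and thus forces a path, rather than merely forbidding branching in some weaker sense. Once these are pinned down, the conclusion is immediate.
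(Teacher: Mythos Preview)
Your proposal is correct and follows essentially the same route as the paper: pass to the weak dual $\mathcal{T}_G$, use that it is a tree with at least two vertices (since $G$ is not a cycle) so it has at least two leaves, and for the second part observe that the absence of internal faces forces $\mathcal{T}_G$ to be a path. The paper's proof is terser but identical in spirit; your extra care in verifying $|V(\mathcal{T}_G)|\ge 2$ and that ``no internal face'' caps the maximum degree at $2$ simply makes explicit what the paper leaves implicit.
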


\begin{proof}
Considering the weak dual, $\mathcal{T}_G$ is a tree by connectedness of $G$. Since every nontrivial tree has at least two leaves and each leaf of $\mathcal{T}_G$ corresponds to some end face of $G$, each $2$-connected outerplanar graph that is not a cycle contains at least two end faces. In particular, if $G$ has no internal face, then $\mathcal{T}_G$ is a path (the converse also holds), implying that $G$ has exactly two end faces.
\end{proof}

We begin this section with the following lemma. This lemma will be used in Sections 3 and 4.

\begin{lemma} \label{lemma}
Let $G$ be a $2$-connected subcubic outerplanar graph with no internal face. Let $x,y$ be a pair of vertices of degree 2 in $G$ such that $x$ belongs to one of the end faces of $G$ and $y$ to the other one, and let $P$ be a shortest $x,y$-path in $G$. Then there exists a packing colouring of $G$ such that the vertices of $V(G)\setminus V(P)$ are coloured with colours from $\{1,2,3,4\}$.
\end{lemma}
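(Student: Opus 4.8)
The plan is to colour the path $P$ with ``large'' colours and everything else with colours from $\{1,2,3,4\}$, so that the only real work is the second part. First I would colour the vertices of $P$ using a packing colouring of $P_\infty$ restricted to $P$, for instance the one from Proposition~\ref{preliminaries}(ii) with $k=5$, which uses only colours in $\{5,6,\dots,17\}$. Since $P$ is a shortest $x,y$-path, it is an isometric subgraph of $G$, so $d_G(u,v)=d_P(u,v)$ for all $u,v\in V(P)$; hence this colouring of $P$ is a valid packing colouring inside $G$ and uses no colour of $\{1,2,3,4\}$. Consequently no constraint couples the colours on $P$ with those off $P$, and it remains only to packing-colour $V(G)\setminus V(P)$ with $\{1,2,3,4\}$ with respect to $G$-distances.

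Next I would pin down the structure off $P$. Because $G$ is $2$-connected and subcubic, no two chords can share an endpoint (such a vertex would have degree at least $4$), so the chords form a matching whose endpoints are exactly the degree-$3$ vertices; as $G$ has no internal face, $\mathcal{T}_G$ is a path and the faces form a chain $F_1,\dots,F_m$ joined by chords $e_1,\dots,e_{m-1}$. Each chord $e_i$ is a $2$-vertex-cut separating the $F_1$-side from the $F_m$-side, so every $x,y$-path, in particular $P$, uses at least one endpoint of $e_i$. Therefore no edge of $G[V(G)\setminus V(P)]$ is a chord, and since all its remaining edges lie on the boundary cycle $C$, the graph $G[V(G)\setminus V(P)]$ is a disjoint union of paths (arcs of $C$) whose interior vertices have degree $2$ in $G$.

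I would then list the off-path vertices as $w_1,w_2,\dots,w_r$ in the order in which they are met along $C$ (skipping the vertices of $P$) and colour them by a short periodic pattern on $\{1,2,3,4\}$, such as $1,2,1,3,1,2,1,4$, which is already a packing colouring of $P_\infty$: colour $1$ falls on an independent set, colour $2$ on indices at mutual distance $\ge 4$, and colours $3,4$ on indices at mutual distance $\ge 8$. To transfer validity to $G$ I would establish lower bounds on $d_G(w_a,w_b)$ between equally coloured vertices. Since a degree-$2$ interior arc vertex can be left only through one of the two corners of its arc (which lie on $P$), any shortest $(w_a,w_b)$-path splits into an arc-segment, a corner-to-corner segment, and another arc-segment; comparing this decomposition with the position of $P$, and using that $P$ traverses each face along its \emph{shorter} boundary arc, bounds these distances from below.

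The main obstacle is precisely this distance estimate, and the delicate configurations are all local to the chords. The worst case is two off-path vertices lying in adjacent faces $F_i,F_{i+1}$ whose off-path arcs meet at a shared corner or at two corners that are close together on $P$, creating a shortcut shorter than the boundary walk; near such junctions the naive arc pattern can put two colour-$2$ or colour-$3$ vertices too close. I would handle these uniformly by a case analysis on which boundary arc of each face $P$ uses, and by reserving colour $4$ to recolour the finitely many offending vertices adjacent to chord corners, so that equally coloured off-path vertices always stay at $G$-distance exceeding their colour. The remainder is the routine verification that, away from the junctions, the period-$8$ pattern meets all four packing constraints.
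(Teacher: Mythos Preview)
Your structural setup is correct: $P$ is isometric in $G$, the off-$P$ part $G[V(G)\setminus V(P)]$ is a disjoint union of induced arcs of $C$, and colouring $P$ with colours $\ge 5$ decouples the two parts. The gap is in the off-$P$ colouring.

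The period-$8$ pattern $1,2,1,3,1,2,1,4$ applied along $C$ does not respect $G$-distances. Within a single arc $D_i$, two vertices at arc-distance $8$ can be at $G$-distance $3$: take any face whose $D_i$-side has length $8$ and whose $P$-side has length $1$; its two chord endpoints $x_i^{a}$ and $x_i^{a+8}$ are joined by the length-$3$ path $x_i^{a}\,p\,p'\,x_i^{a+8}$. With your pattern both endpoints can receive colour $3$ (positions $\equiv 4\pmod 8$) or both colour $4$ (positions $\equiv 0\pmod 8$), and either is a clash. This is not a rare ``junction'' phenomenon between adjacent faces: such long-on-one-side faces may recur throughout $G$, so the offending vertices are not finitely many in any uniform sense, and your proposed remedy of recolouring offenders with colour $4$ cannot work because colour $4$ is already compromised. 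The observation that $P$ traverses the shorter arc of each face only yields $d_P(p,p')\le (\text{arc-distance})+2$, which is the wrong direction for the bound you need.

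The paper sidesteps this by working component by component with the shorter pattern $1,2,1,3$, restarted at the $x$-end of each $D_i$ (ordered along $P$, not along $C$), so that initially no vertex carries colour $4$. Colour-$3$ clashes are then classified into two explicit types; for the within-component type (the ``critical pair'' $x_i^{m},x_i^{n}$ at $G$-distance $3$) the fix is not a single recolouring but a local pattern swap: from $x_i^{m}$ onward, $\underline{3},1,2,1,\underline{3},1,2,\dots$ is replaced by $\underline{3},1,4,1,\underline{2},1,3,\dots$, which inserts a single $4$ between the clashing $3$'s and simultaneously shifts all later $2$'s and $3$'s. One then checks that the inserted $4$'s stay at mutual $G$-distance $>4$. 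Your outline needs an explicit, verifiable modification of this kind rather than a generic appeal to recolouring near corners.
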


\begin{proof}
Let $C$ denote the boundary cycle of $G$. If there is no chord in $G$, then $G$ is a cycle and, by Proposition \ref{preliminaries}.i), $\chi_{\rho}(G)\leq 4$.

Thus we may assume that $C$ contains some chords in $G$. Note that $C-P$ is not necessarily connected, but, since there is no internal face, each component of $C-P$ is an induced path of $G$. Let $D_i$, $i=1,\dots, \,k$, denote the components of $C-P$ in an ordering from $x$ to $y$ (i.e., for $i<j$, $D_i$ has a neighbour in $P$ that is closest to $x$ than any neighbour of $D_j$ in $P$), and let $l_i$ denote 
the length of $D_i$. We further denote the vertices of each $D_i$ by $x_i^1, x_i^2, \dots, x_i^{l_i}$ in 
an ordering starting from a vertex of $D_i$ which is closest to $x$ in $D_i$. The described structure is shown in Fig. \ref{figL5}, where the thick $x,y$-path depicts $P$. 

\begin{figure}
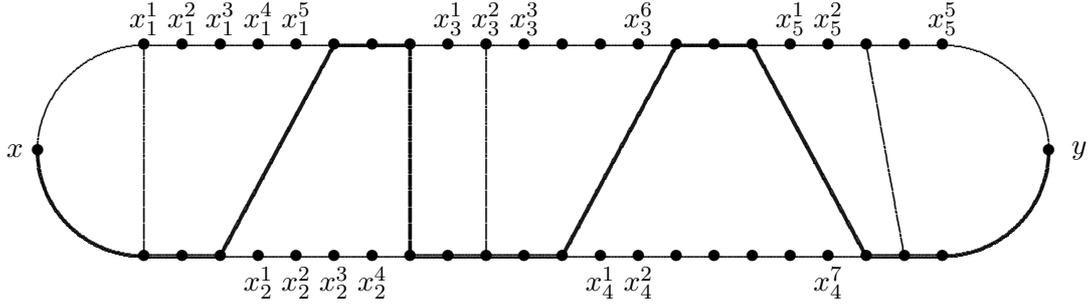
 
\beginpicture
\setcoordinatesystem units <1mm,0.7mm>
\setplotarea x from 0 to 155, y from 0 to 55
\ellipticalarc axes ratio 1:1 180 degrees from 25 45 center at 25 25
\ellipticalarc axes ratio 1:1 -90 degrees from 25 4.8 center at 25 25
\ellipticalarc axes ratio 1:1 -90 degrees from 25 5.2 center at 25 25
\ellipticalarc axes ratio 1:1 180 degrees from 130 5 center at 130 25
\ellipticalarc axes ratio 1:1 90 degrees from 130 5.2 center at 130 25
\ellipticalarc axes ratio 1:1 90 degrees from 130 4.8 center at 130 25
\plot 25 45  130 45 /
\plot 25 5  130 5 /
\plot 25 4.8  35.1 4.8  50.1 44.8  60 44.8  59.85  45  59.85  5  60 4.8  80.1 4.8  95.1 44.8  104.9 44.8  119.9 4.8  130 4.8 /
\plot 25 5.2  34.9 5.2  49.9 45.2  60 45.2  60.15  45  60.15  5  60 5.2  79.9 5.2  94.9 45.2  105.1 45.2  120.1 5.2  130 5.2 /

\plot 25 5  25 45 /
\plot 35 5  50 45 /
\plot 60 5  60 45 /
\plot 70 5  70 45 /
\plot 80 5  95 45 /
\plot 120 5  105 45 /
\plot 125 5  120 45 /
\put{$x$} at 8 25
\put{$y$} at 148 25
\put{$\bullet$} at 11 25
\put{$\bullet$} at 144 25  
\put{$\bullet$} at 25 45  
\put{$\bullet$} at 30 45  
\put{$\bullet$} at 35 45  
\put{$\bullet$} at 40 45  
\put{$\bullet$} at 45 45  
\put{$\bullet$} at 50 45  
\put{$\bullet$} at 55 45  
\put{$\bullet$} at 60 45  
\put{$\bullet$} at 65 45  
\put{$\bullet$} at 70 45  
\put{$\bullet$} at 75 45  
\put{$\bullet$} at 80 45  
\put{$\bullet$} at 85 45  
\put{$\bullet$} at 90 45  
\put{$\bullet$} at 95 45  
\put{$\bullet$} at 100 45  
\put{$\bullet$} at 105 45  
\put{$\bullet$} at 110 45  
\put{$\bullet$} at 115 45  
\put{$\bullet$} at 120 45  
\put{$\bullet$} at 125 45  
\put{$\bullet$} at 130 45  
\put{$\bullet$} at 25 5  
\put{$\bullet$} at 30 5  
\put{$\bullet$} at 35 5  
\put{$\bullet$} at 40 5  
\put{$\bullet$} at 45 5  
\put{$\bullet$} at 50 5  
\put{$\bullet$} at 55 5  
\put{$\bullet$} at 60 5  
\put{$\bullet$} at 65 5  
\put{$\bullet$} at 70 5  
\put{$\bullet$} at 75 5  
\put{$\bullet$} at 80 5  
\put{$\bullet$} at 85 5  
\put{$\bullet$} at 90 5  
\put{$\bullet$} at 95 5  
\put{$\bullet$} at 100 5  
\put{$\bullet$} at 105 5  
\put{$\bullet$} at 110 5  
\put{$\bullet$} at 115 5  
\put{$\bullet$} at 120 5  
\put{$\bullet$} at 125 5  
\put{$\bullet$} at 130 5  
\put{$x_1^1$} at 25 50
\put{$x_1^2$} at 30 50
\put{$x_1^3$} at 35 50
\put{$x_1^4$} at 40 50
\put{$x_1^5$} at 45 50
\put{$x_2^1$} at 40 0
\put{$x_2^2$} at 45 0
\put{$x_2^3$} at 50 0
\put{$x_2^4$} at 55 0
\put{$x_3^1$} at 65 50
\put{$x_3^2$} at 70 50 
\put{$x_3^3$} at 75 50
\put{$x_3^6$} at 90 50
\put{$x_4^1$} at 85 0
\put{$x_4^2$} at 90 0
\put{$x_4^7$} at 115 0
\put{$x_5^1$} at 110 50
\put{$x_5^2$} at 115 50
\put{$x_5^5$} at 130 50

\endpicture
\caption{Structure of $G$ in Lemma \ref{lemma}.}
\label{figL5}
\end{figure}

We colour each component $D_i$ of $C-P$ with a pattern $1,2,1,3$ starting from $x_i^1$ ($i=1,2,\dots, k$), i.e., for each odd $j$, $x_i^j$ is coloured with colour $1$, for each $j$ divisible by $4$, $x_i^j$ obtains colour $3$, and, for each even $j$ not divisible by $4$, we colour vertex $x_i^j$ with colour $2$, $j=1,2,\dots, l_i$.
 We denote by $\chi$ the defined colouring. Since $P$ is a shortest path and $G$ is subcubic, we are going to show that there is no collision between any pair of vertices coloured with colour $1$ or $2$, respectively. Suppose to the contrary that there is a pair of clashing vertices $a$ and $b$ coloured with colour $1$. Clearly $a$ and $b$ belong to the same component $D_i$ of $C-P$ and $ab\in E(G)\setminus E(C)$ by the definition of $\chi$. But then we get a contradiction with the fact that $D_i$ is an induced path of $G$. Now suppose that there is a pair of clashing vertices $a$ and $b$ coloured with colour $2$, i.e., $d_G(a,b)\leq 2$. Again, $a$ and $b$ must belong to the same component $D_i$ of $C-P$, otherwise $d_G(a,b)>2$ and we obtain a contradiction. Analogously as for colour $1$ we can show that $ab\not\in E(G)$. Thus $a$ and $b$ must have a common neighbour $c$ in $G$. From the definition of $\chi$ and also because $D_i$ is an induced path in $G$, $c\not\in V(D_i)$, and $c\not\in P$ since $G$ is subcubic, a contradiction with the existence of $c$.

Therefore the only possible collision in the defined colouring $\chi$ could be between vertices coloured with colour $3$. Analogously as for colours $1$ and $2$, any pair of clashing vertices $a$ and $b$ cannot be at distance one or two apart. Therefore any such collision happens for $a$ and $b$ with $d_G(a,b)=3$. We will check and modify collisions in the defined colouring $\chi$ of the components $D_1,D_2,\dots, D_k$ of $C-P$ one-by-one starting from $D_1$ and from the vertex $x_i^1$ in each $D_i$. Note that, in each step of the modification process, we check the modified colouring, not the original one. The following possible collisions can occur:

\begin{mathitem} 
\item[{\sl Case 1:}] $a$ and $b$ belong to different components $D_i$ and $D_j$ of $C-P$, $i,j\in\{ 1,2,\dots, k \}$. Since $d_G(a,b)=3$, (up to a symmetry) $a=x_i^{l_i}$, 
$b$ belongs to the chord of $D_j\cup P$ which is closest to $a$, and $j=i+1$, otherwise we get $d_G(a,b)>3$. Then we can modify the colouring $\chi$ of $D_j$ by recolouring vertices $x_j^s$, $s=3,\dots, l_j-1$, with $\chi(x_j^s):= \chi(x_j^{s+1})$ and we set $\chi(x_j^{l_j})\in \{1,2,3\}$ depending on the continuation of the pattern $1,2,1,3$ in $D_j$.

\item[{\sl Case 2:}] $a$ and $b$ belong to the same component $D_i$. Since $d_G(a,b)=3$, $a$ and $b$ must belong to consecutive chords of $P\cup D_i$ and there is no vertex between these chords on $P$. We call such a pair of vertices coloured with colour $3$ a {\em critical pair}. Consider a critical pair $a$ and $b$ such that $a=x_i^m$, $b=x_i^n$, $m<n$, and that there is no critical pair $a'$ and $a$ with $a'=x_i^o$, $o<m$. Then we modify the colouring of the vertices of $D_i$ starting at vertex $a$ by $\underline{3},1,4,1,\underline{2},1,3,1,2,\dots$ instead of $\underline{3},1,2,1,\underline{3},1,2,1,3 \dots$, i.e., we recolour the vertex $x_i^{m+2}$ with colour $4$ and we switch colours $2$ and $3$ of the vertices $x_i^j$ for even $j>m+2$. Note that the underlined colours represent the critical pair $a$ and $b$. It is easy to verify that vertices coloured with colour $4$ are mutually at distance more than $4$ apart, implying that there is no collision between any pair of vertices coloured with colour $4$.
\end{mathitem}
After these modifications we obtain a colouring of all the vertices of $G-P$ with colours $\{1,2,3,4\}$ satisfying the distance constraints of a packing colouring.

\end{proof}

\begin{theorem} \label{out1}
If $G$ is a $2$-connected subcubic outerplanar graph with no internal face, then $\chi_{\rho}(G)\leq 15$.
\end{theorem}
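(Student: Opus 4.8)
The plan is to combine Lemma~\ref{lemma} with a packing colouring of the infinite path restricted to high colours. We may assume $G$ is not a cycle, since otherwise Proposition~\ref{preliminaries}(i) already gives $\chi_\rho(G)\le 4$. First I would fix a pair $x,y$ of degree-$2$ vertices lying in the two end faces of $G$ (these exist because each end face of a $2$-connected outerplanar graph contains a vertex of degree $2$) together with a shortest $x,y$-path $P$, and invoke Lemma~\ref{lemma} to colour every vertex of $V(G)\setminus V(P)$ with a colour from $\{1,2,3,4\}$ so that these vertices form a valid packing colouring among themselves. It then remains only to colour $V(P)$, and I would do this with a \emph{disjoint} palette of higher colours. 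The reason for insisting on disjoint colour sets is that a packing conflict can occur only between two vertices of the same colour; hence no vertex of $P$ can ever clash with a vertex off $P$, and the two colourings may be verified independently.

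The second ingredient is that $P$ is isometric in $G$: any subpath of a shortest path is again a shortest path, so $d_G(u,v)=d_P(u,v)$ for all $u,v\in V(P)$ (a strictly shorter $u,v$-path in $G$ could be spliced into $P$ to yield a shorter $x,y$-path, a contradiction). Consequently I may regard $P$ as embedded in $P_\infty$ with distances preserved and transport any packing colouring of $P_\infty$ onto $P$: two vertices of $P$ receiving the same colour $c$ lie more than $c$ apart in $P_\infty$, hence more than $c$ apart in $G$. Applying Proposition~\ref{preliminaries}(ii) with $k=5$ immediately colours $P$ with the $13$ colours $\{5,6,\dots,17\}$ and so, overlaying the off-path colouring, gives a packing $17$-colouring of $G$ as a first, weaker bound.

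To reach the sharper value $15$ I would instead colour $P$ with only the $11$ colours $\{5,6,\dots,15\}$, that is, establish the ``$\{k,\dots,3k\}$'' refinement of Proposition~\ref{preliminaries}(ii) in the single case $k=5$. Concretely, the task is to exhibit a packing colouring of $P_\infty$ in which, for each $c\in\{5,\dots,15\}$, the vertices coloured $c$ are pairwise at distance greater than $c$; the density count $\sum_{c=5}^{15}\tfrac{1}{c+1}>1$ shows this is not excluded, and I would produce an explicit periodic pattern and check the distance constraints colour by colour. This is the main obstacle: Proposition~\ref{preliminaries}(iii), which would supply exactly such a saving, applies only for $k\ge 34$, so the small value $k=5$ must be handled by hand, and squeezing the two topmost colours of $\{5,\dots,17\}$ out of the range down to $\{5,\dots,15\}$ is precisely where the tightness of the bound resides. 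Once such a pattern is fixed, transporting it to the isometric path $P$ and combining it with the off-path colouring from Lemma~\ref{lemma} produces a packing colouring of $G$ using only the colours $\{1,2,\dots,15\}$, which is the claimed bound.
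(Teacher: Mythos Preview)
Your approach is exactly the paper's: apply Lemma~\ref{lemma} to colour $V(G)\setminus V(P)$ with $\{1,2,3,4\}$, observe that the shortest path $P$ is isometric in $G$, and finish by packing-colouring $P$ with $\{5,\dots,15\}$ via an explicit periodic pattern on $P_\infty$. The one piece you defer---the actual pattern---is what the paper supplies (a period-$36$ sequence found by computer search), and since the density heuristic $\sum_{c=5}^{15}\tfrac{1}{c+1}>1$ is only necessary, not sufficient, exhibiting that concrete pattern is the entire content of the step; without it the argument is a correct plan but not yet a proof.
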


\begin{proof}
Let $x,y$ be any pair of vertices of degree 2 in $G$ belonging to distinct end faces of $G$. Let $P$ be a shortest $x,y$-path in $G$. By Lemma~\ref{lemma}, the vertices of $V(G)\setminus V(P)$ can be coloured with colours from $\{1,2,3,4\}$. 
Then the colouring can be completed in a packing 15-colouring of $G$ by colouring the vertices along the path $P$ starting at $x$ and using a packing colouring of the infinite path (since $P$ is a shortest path in $G$, then the distance between any pair of vertices of $P$ is the same on $P$ and on $G$). For this, we repeat the following pattern with colours from $\{5,\ldots,15\}$ of length 36 along the vertices of $P$ starting at $x$:
\begin{equation}\label{pat5-15}
5,6,7,9,13,12,5,8,6,10,7,11,5,9,14,6,8,15,5 ,7,13,10,6,11,5,8,9,7,12,6,5,14,10,15,8,11
\end{equation}

It is easy to check that any two colours $i$ in this repeating sequence are separated by at least $i$ integers.
\end{proof}

Note that the previous pattern was found by a computer search. 

\section{Asymptotic results for subcubic outerplanar graphs}

The main goal of this paper is to study
the finiteness of the packing chromatic number of subcubic outerplanar graphs, i.e, we ask whether the packing chromatic number of an outerplanar graph with maximum degree at most $3$ depends on the order of the graph or not. In this section we prove that, for any $2$-connected outerplanar graph with a fixed number of internal faces and for any connected outerplanar
graph with a fixed number of faces, the packing chromatic number does not depend on the order of the graph.

We begin this section by proving the following useful lemma that will also be used in Section 4. We recall that the weak dual of a $2$-connected outerplanar graph is a tree and that $u_F$ is the vertex of the weak dual corresponding to the face $F$.

\begin{lemma}\label{out2}
 There exists a packing colouring of $P^{+}_{\infty}$ with colours $\{5,\ldots,15\}$ such that the first vertex along the path is at distance at least $\lceil (i-5)/2 \rceil$ of any vertex of colour $i$.

\end{lemma}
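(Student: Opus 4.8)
The plan is to exhibit an explicit eventually-periodic colouring of $P^{+}_{\infty}$ using the length-$36$ pattern already constructed in Theorem~\ref{out1}, but with the repetitions arranged so that the \emph{start} of the path carries the small colours and the large colours are pushed rightward. Recall that the pattern \eqref{pat5-15} is a valid packing colouring of the \emph{two-way} infinite path $P_{\infty}$ with colours $\{5,\dots,15\}$; by periodicity it also packing-colours $P^{+}_{\infty}$ when repeated indefinitely to the right. The only extra requirement here is the boundary condition: the first vertex $v_0$ must be at distance at least $\lceil (i-5)/2\rceil$ from every vertex coloured $i$. My plan is to \emph{choose the phase} of the periodic pattern (i.e.\ decide which of the $36$ cyclic positions sits at $v_0$) so that this initial gap is achieved, and if a single phase shift does not suffice for the largest colours, to prepend a short finite prefix using only very small colours (colours $5$ and $6$, say) before entering the periodic regime.

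Concretely, I would first observe what the bound $\lceil(i-5)/2\rceil$ demands: colour $5$ is allowed at $v_0$ itself (distance $0$), colour $6$ needs distance $\ge 1$, colour $7$ distance $\ge 1$, and in general each increment of $2$ in $i$ forces one more unit of clearance. So the heaviest colours $14,15$ must not appear among the first $\lceil(14-5)/2\rceil=5$ and $\lceil(15-5)/2\rceil=5$ vertices respectively. I would then scan the cyclic shifts of \eqref{pat5-15} for one whose first several entries are small (the pattern begins $5,6,7,9,13,\dots$, so its natural phase already places $5$ at $v_0$, then $6,7$ nearby), and verify directly that this phase meets every inequality $d(v_0, \text{colour }i)\ge\lceil(i-5)/2\rceil$. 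Because these are finitely many conditions (one per colour, and each colour's nearest occurrence to $v_0$ is determined by a finite initial window of the infinite repetition), this verification is a routine finite check rather than an argument.

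The substantive point I must be careful about is that shifting the phase could in principle break the internal packing condition at the seam where the path ends on the left — but here there is no left seam, since $P^{+}_{\infty}$ terminates at $v_0$, so truncating the two-way colouring to a one-way ray can only \emph{remove} constraints and never create a violation. Thus the internal validity is inherited for free from the $P_{\infty}$ colouring of Theorem~\ref{out1}, and all that genuinely needs proving is the single boundary inequality. If a pure cyclic shift of \eqref{pat5-15} cannot simultaneously clear colours $14$ and $15$ from the first five positions while keeping all smaller colours legal, the fallback is to place a handful of forced small colours (a short block such as $5,6,5,7,5$) at positions $v_0,\dots,v_4$ and then splice in the periodic pattern at a matching phase, checking compatibility only across that one finite junction.

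The main obstacle I anticipate is the seam compatibility in the fallback construction: when prepending a custom finite prefix and then resuming the periodic pattern, I must confirm that no two equally coloured vertices straddling the junction are too close — in particular that a large colour like $13$ in the prefix and its next occurrence in the periodic tail remain more than $13$ apart. Since the period is $36$ and the prefix is short, this is again a bounded check, but it is the one place where an off-by-one error could slip in, so I would tabulate the positions of each colour explicitly near the seam before declaring the colouring valid.
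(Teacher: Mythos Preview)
Your approach is correct and essentially the same as the paper's: use Pattern~\eqref{pat5-15} starting at its natural phase (position $0$, colour $5$) and verify the finitely many boundary inequalities, noting that since all colours are at most $15$ and $\lceil(15-5)/2\rceil=5$, only the first six positions actually need checking. The paper does exactly this, so your phase-shift and prefix fallback machinery is unnecessary---the original phase already works.
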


\begin{proof}
By considering Pattern~\eqref{pat5-15} from the proof of Theorem~\ref{out1} starting at the first vertex of the path (the vertex of degree $1$), we can easily check that the first six vertices of $P^{+}_{\infty}$ satisfy the property. Since the colours used in Pattern~\eqref{pat5-15} are bounded by $15$, the other vertices (other than the first six vertices) satisfy the property as well.
\end{proof}

For positive integers $i$, $j$ and $k$, let $r^k_{i,j}\in \mathbb{Z}$ such that $ r^k_{i,j}\equiv i-j \pmod{k}$ with minimum absolute value. The value $|r^k_{i,j}|$ corresponds to the distance between two vertices $i$ and $j$ in a cycle $C_k$ with vertex set $\{0,\ldots, k-1\}$ (the vertices are enumerated along the cycle). 
A subset of vertices $A$ of a graph $G$ is a \emph{cycle-distance-preserved} set if there exists an ordering $v_{A}^{0},\ldots,v_{A}^{|A|-1}$ of the vertices of $A$ satisfying $d_G(v^j_{A},v^{j'}_{A})\ge |r^{|A|}_{j,j'}|$, for integers $0\le j<j'\le |A|-1$.

\begin{lemma}\label{acycle}
For any positive integers $k$ and $n>2$, there exists a packing colouring of the cycle $C_n$ with colours from $\{k,\ldots,6k+4  \}$.
\end{lemma}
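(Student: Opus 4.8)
The plan is to prove Lemma~\ref{acycle} by reducing the packing colouring of $C_n$ to a packing colouring of the two-way infinite path $P_{\infty}$, using Proposition~\ref{preliminaries}.ii). The key observation is that the distance between two vertices in a cycle $C_n$ is never larger than the distance between the corresponding integers on $P_{\infty}$; more precisely, if we label the cycle vertices by $\{0,1,\dots,n-1\}$ along the cycle, then $d_{C_n}(a,b)=\min(|a-b|,\,n-|a-b|)\le |a-b|$. Hence if a colouring of $P_{\infty}$ (viewed as a colouring of $\mathbb{Z}$) restricts to a valid packing colouring on the segment $\{0,\dots,n-1\}$, one cannot directly wrap it onto the cycle, because the wrap-around identification may bring two like-coloured vertices closer together. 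The strategy is therefore to build the cycle colouring so that the distance constraints survive the wrap-around.

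First I would invoke Proposition~\ref{preliminaries}.ii) to fix a packing colouring of $P_{\infty}$ using colours $\{k,k+1,\dots,3k+2\}$. Since every colour used is at least $k$, the distance constraints involved are all at least $k$. Next I would partition the cycle $C_n$ into two arcs and colour each arc by copying a sufficiently long stretch of the $P_{\infty}$ colouring, but separated by carefully chosen "buffer" blocks so that the two seams of the wrap-around do not create collisions. The colour range $\{k,\dots,6k+4\}$ is roughly twice as large as the range $\{k,\dots,3k+2\}$ available from a single infinite path, which strongly suggests the intended construction splits the cycle into two halves, colours one half with colours $\{k,\dots,3k+2\}$ from one copy of the infinite-path pattern and the other half with the disjoint colour set $\{3k+3,\dots,6k+4\}$ (again applying Proposition~\ref{preliminaries}.ii) with parameter $3k+3$) so that no two vertices on opposite arcs can ever clash regardless of their cyclic distance.

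The concrete steps would be: (1) split $\{0,\dots,n-1\}$ into two arcs $A_1$ and $A_2$ of roughly equal length; (2) colour $A_1$ with a segment of a packing $P_{\infty}$-colouring using colours $\{k,\dots,3k+2\}$, and colour $A_2$ with a segment of a packing $P_{\infty}$-colouring using colours $\{3k+3,\dots,6k+4\}$ (valid by another application of Proposition~\ref{preliminaries}.ii)); (3) since the two colour sets are disjoint, any potential collision must occur within a single arc; (4) within an arc, the only new danger relative to $P_{\infty}$ is a pair of vertices near the two endpoints of the arc that become close via the cyclic wrap. I would handle small residual cases (small $n$, parity of $n$, the exact alignment of the two patterns at the seams) by adjusting arc lengths and, if necessary, choosing the starting offsets of the two infinite-path patterns so that at each seam the two closest like-coloured vertices in the same arc remain at distance exceeding their colour.

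The main obstacle will be controlling the two seams where the arcs meet: the wrap-around shortens cyclic distances precisely for pairs straddling or flanking a seam, and one must verify that no two equally-coloured vertices of the same arc are pushed within their forbidden distance by the wrap. Because each colour used is at least $k$, a collision would require two like-coloured vertices within cyclic distance $k$; ensuring this cannot happen amounts to checking that near each seam the colours appearing within a window of length roughly $k$ on one side are disjoint (in the relevant sense) from those within distance $k$ on the other side of the cycle. I expect this to be the delicate bookkeeping part of the argument, most cleanly dispatched by the disjointness of the two colour palettes on the two arcs together with a short case analysis handling the alignment at the two arc-endpoints and the smallest values of $n$.
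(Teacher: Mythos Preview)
Your overall strategy---two arcs with disjoint colour palettes so that collisions can only occur within an arc---is sound, but there is a genuine gap in step~(3). You propose to colour $A_2$ with colours $\{3k+3,\dots,6k+4\}$ ``by another application of Proposition~\ref{preliminaries}.ii)'', but that proposition applied with starting colour $k'=3k+3$ produces colours up to $3k'+2=9k+11$, not $6k+4$. In fact no packing colouring of a long path can use only $\{3k+3,\dots,6k+4\}$: the density of vertices receiving colour $c$ is at most $1/(c+1)$, and $\sum_{c=3k+3}^{6k+4}1/(c+1)<\ln 2<1$, so these colours together cannot cover all vertices of an arbitrarily long arc. Hence if $A_2$ has length growing with $n$, step~(3) is impossible as stated.

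The paper's proof fixes exactly this issue by making the two arcs very unequal: the ``high-colour'' arc $P'$ has \emph{fixed} length $3k+2$ (independent of $n$) and is coloured with pairwise distinct colours from $\{3k+3,\dots,6k+4\}$, while the remaining arc $C_n-P'$ (which absorbs all the growth of $n$) is coloured with $\{k,\dots,3k+2\}$ via Proposition~\ref{preliminaries}.ii). The buffer length $3k+2$ is precisely chosen so that any shortcut through $P'$ between two vertices of $C_n-P'$ has length at least $3k+3$, exceeding every colour used on $C_n-P'$; this is what handles the wrap-around you identified in step~(4). Small $n$ (namely $n\le 6k+5$) is treated separately by direct constructions. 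Your proposal becomes correct if you replace ``roughly equal arcs'' by this asymmetric split.
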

\begin{proof} Let $C_n$ be a cycle of length $n$.
First, if $n\le 5k+5$, then we can colour each vertex of $C_n$ with a different colour from $\{k,\ldots,6k+4  \}$.

Second, if $5k+5 <n\le 6k+5$, then we colour $3k$ consecutive vertices of $C$ with colours $k,\ldots, 3k-1,k,\dots, 2k-1$, and colour the remaining $n-3k\le 3k+5$ vertices of $C_n$ with mutually distinct colours from $\{ 3k, \ldots, 6k+4\}$.

Third, suppose $n> 6k+5$. By Proposition \ref{preliminaries}.ii), we can colour $P_{\infty}$ with colours from $\{k,\ldots, 3k+2\}$.
Let $P'$ be any subpath of $C_n$ on $3k+2$ vertices. Since the distance between the two ends of $C_n-P'$ is at least $3k+3$ in $C_n-P'$ and exactly $3k+3$ in $C_n$, we can colour the vertices of $C_n-P'$ with the colours $\{k,\ldots, 3k+2\}$ (using  Proposition \ref{preliminaries}.ii) ) and the vertices of $P'$ with mutually distinct colours from $\{ 3k +3,\ldots, 6k+ 4 \} $.
\end{proof}

A subset of vertices $A$ of a graph $G$ is \emph{decomposable} into $r$ cycle-distance-preserved  sets if there exist $r$ sets of vertices $A_1$, $A_2$, $\ldots$, $A_r$, such that $A_1\cup \ldots \cup A_r=A$ and for each integer $i$, $A_i$ is a cycle-distance-preserved set. The following lemma will be useful in order to prove Theorems \ref{k-inner face structure} and \ref{k-face structure}.

\begin{lemma}\label{disjoint cycles}
Let $G$ be a graph and let $A\subseteq V(G)$ be a subset decomposable into $r$ cycle-distance-preserved sets. The vertices of $A$ can be packing-coloured with colours $\{k,\ldots,6^{r} (k+1)-2  \}$, for any positive integer $k$.
\end{lemma}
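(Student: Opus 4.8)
The plan is to proceed by induction on $r$, exploiting the fact that the target palette size is governed by the recursion $6^{r}(k+1)-2 = 6^{r-1}\big((6k+5)+1\big)-2$, so that the top colour for $r$ sets starting at $k$ agrees with the top colour for $r-1$ sets starting at $6k+5$. This lets me peel off one cycle-distance-preserved set at a time, each consuming a disjoint block of colours, while the engine that colours a single such set is the transfer of a cycle packing colouring supplied by Lemma~\ref{acycle}.

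For the base case $r=1$, the set $A$ is itself a single cycle-distance-preserved set, say of size $m=|A|$ with witnessing ordering $v_A^0,\ldots,v_A^{m-1}$. If $m\le 2$ I would simply give the (at most two) vertices distinct colours from the palette, which is possible since $\{k,\ldots,6k+4\}$ has at least $5k+5\ge 10$ colours. If $m>2$, I would invoke Lemma~\ref{acycle} to obtain a packing colouring $c$ of the cycle $C_m$ (on vertex set $\{0,\ldots,m-1\}$) using colours from $\{k,\ldots,6k+4\}=\{k,\ldots,6(k+1)-2\}$, and then transport it to $A$ by setting $\chi(v_A^j)=c(j)$. The key point is that if $\chi(v_A^j)=\chi(v_A^{j'})=i$ then $c(j)=c(j')=i$, so $d_{C_m}(j,j')>i$; but by the defining inequality of a cycle-distance-preserved set, $d_G(v_A^j,v_A^{j'})\ge |r^{m}_{j,j'}| = d_{C_m}(j,j')>i$, whence no two equally coloured vertices of $A$ clash. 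Thus $\chi$ is a packing colouring of $A$ with the required palette.

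For the inductive step, write $A=A_1\cup\cdots\cup A_r$ with each $A_i$ cycle-distance-preserved, and set $A'=A_2\cup\cdots\cup A_r$, which is decomposable into $r-1$ cycle-distance-preserved sets. I would colour $A_1$ by the base case with colours $\{k,\ldots,6k+4\}$, and colour $A'$ by the induction hypothesis applied with $k'=6k+5$, which yields colours $\{6k+5,\ldots,6^{r-1}(6k+6)-2\}=\{6k+5,\ldots,6^{r}(k+1)-2\}$. These two palettes are disjoint and their union is exactly $\{k,\ldots,6^{r}(k+1)-2\}$. On overlapping vertices of $A_1\cap A'$ I would keep the $A_1$-colour; since restricting a packing colouring to a subset of vertices preserves its validity, the $A'$-colouring stays valid on $A'\setminus A_1$. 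Validity of the combined colouring then follows because any two vertices sharing a colour $i$ lie entirely in the part whose palette contains $i$ (the palettes being disjoint), where they are already known not to clash.

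The main obstacle, and the real content of the argument, is the base case: recognising that the distance inequalities defining a cycle-distance-preserved set are precisely what is needed to pull back a packing colouring of $C_m$ from Lemma~\ref{acycle} without introducing conflicts. Once this transfer is in place the inductive step is essentially bookkeeping of disjoint colour blocks, the only mild subtlety being that the decomposition need not consist of pairwise disjoint sets, which I resolve by the override rule above together with the monotonicity of packing colourings under vertex deletion.
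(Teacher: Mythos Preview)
Your proof is correct and follows essentially the same route as the paper: induction on $r$, with Lemma~\ref{acycle} supplying the colouring of a single cycle-distance-preserved set and disjoint colour blocks handling the inductive step. The only cosmetic differences are that the paper peels off the last set $A_{r+1}$ rather than the first, and that you make explicit both the transfer of the $C_m$-colouring via the distance inequality and the $|A|\le 2$ edge case, which the paper leaves implicit.
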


\begin{proof}
We proceed by induction on $r$. For $r=1$, since $A$ is a cycle-distance-preserved set, by Lemma \ref{acycle}, we can colour the vertices of $A$ with colours $\{k,\ldots,6k+4  \}$.
Now suppose that a subset $A\subset V(G)$ is decomposable into $r+1$ cycle-distance-preserved sets. Using induction hypothesis we can colour the vertices of $A_1$, $\ldots$, $A_r$ with colours $\{k,\ldots,6^{r} (k+1)-2 \}$.
For the vertices of $A_{r+1}$, by Lemma \ref{acycle}, we can use colours $\{6^{r} (k+1)-1, \ldots, k '\}$, where $k'= 6( 6^{r} (k+1) -1)+4=  6^{r+1} (k+1)-2$. Note that we do not need to change colours of the vertices from $\cup_{i=1}^r (A_i \cap A_{r+1})$ (in the case it is not empty).
\end{proof}

The following theorem is one of our main results. It shows that the packing chromatic number of a 2-connected subcubic outerplanar graph with bounded number of internal faces is bounded by a constant.

\begin{theorem} \label{k-inner face structure}
If $G$ is a $2$-connected subcubic outerplanar graph with $r$ internal faces, then $\chi_{\rho}(G)\le  17\times  6^{3r}-2$.
\end{theorem}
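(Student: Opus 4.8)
The goal is to bound $\chi_\rho(G)$ for a $2$-connected subcubic outerplanar graph $G$ with $r$ internal faces. The plan is to reduce the problem to the no-internal-face case handled in Theorem~\ref{out1} and Lemma~\ref{lemma}, and to control the ``extra'' vertices via Lemma~\ref{disjoint cycles}. The key structural observation is that in the weak dual $\mathcal{T}_G$, which is a tree, an internal face corresponds to a vertex of degree at least $3$. Deleting the edges of $G$ lying on suitable chords (equivalently, decomposing $\mathcal{T}_G$ at its branch vertices) cuts $G$ into at most a bounded number of pieces, each of which is a $2$-connected subcubic outerplanar graph whose weak dual is a path, i.e.\ a graph with no internal face. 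Since a tree with $r$ vertices of degree $\ge 3$ decomposes into $O(r)$ paths, I would first argue that $G$ is built from $O(r)$ ``path-type'' sub-blocks joined along the chords incident to internal faces.

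\textbf{Colouring the path-type pieces.} For each path-type piece I would apply Lemma~\ref{lemma} to colour all of its vertices except those on a designated shortest path with colours from $\{1,2,3,4\}$, and then colour the shortest path itself using a shift of the infinite-path pattern~\eqref{pat5-15}, as in Theorem~\ref{out1}. The difficulty is that these shortest paths, one per piece, must be packing-coloured \emph{simultaneously} and consistently across the whole of $G$, not just within each piece. This is precisely where the cycle-distance machinery enters: the set of all vertices lying on the internal faces and on the connecting shortest paths should be organized into a subset $A \subseteq V(G)$ that is decomposable into a bounded number $r'$ of cycle-distance-preserved sets, where $r'$ grows linearly in $r$. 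The count $6^{3r}$ in the exponent strongly suggests that the number of cycle-distance-preserved sets needed is about $3r$ (giving the factor $6^{3r}$ via Lemma~\ref{disjoint cycles}), while the leading constant $17$ matches the colour count $\{5,\ldots,15\}$ re-expressed through Lemma~\ref{out2}.

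\textbf{Assembling the bound.} Concretely, I would (i) colour the bulk of the graph---everything off the distinguished paths and internal-face structure---with $\{1,2,3,4\}$ using repeated applications of Lemma~\ref{lemma}, verifying that distinct pieces do not clash because they are separated by the shortest paths; (ii) identify the remaining vertices as a set $A$ decomposable into roughly $3r$ cycle-distance-preserved sets, one group of sets handling the two end-face shortest paths and the others handling the internal branchings; and (iii) invoke Lemma~\ref{disjoint cycles} with $k$ chosen so that the output colour range $\{k,\ldots,6^{3r}(k+1)-2\}$ shifts above $4$, together with Lemma~\ref{out2} to guarantee the boundary distance condition where an internal path meets the low-colour region. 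Optimizing $k$ (likely $k=2$ or $k=3$) and tracking how the end-face pattern from Theorem~\ref{out1} fits into the first cycle-distance-preserved set should yield the leading factor $17$.

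\textbf{The main obstacle.} The hardest part will be step~(ii): showing that the union of the distinguished shortest paths and internal-face cycles genuinely \emph{is} decomposable into $O(r)$ cycle-distance-preserved sets, with the required ordering on each set satisfying $d_G(v^j_A,v^{j'}_A)\ge |r^{|A|}_{j,j'}|$. Because $G$ is $2$-connected and outerplanar, each such cycle-distance-preserved set should be traceable to an actual cycle or near-cycle in $G$ whose internal distances dominate the cyclic distances; establishing this requires careful use of outerplanarity (non-crossing chords) and of the fact that the chosen paths are \emph{shortest} paths, so that distances measured along them coincide with distances in $G$. Once that decomposition is in hand, the colour bound follows mechanically from Lemma~\ref{disjoint cycles} and Lemma~\ref{out2}, and it only remains to check that the low colours $\{1,2,3,4\}$ assigned off the paths do not conflict with each other across pieces---which holds because, exactly as in Lemma~\ref{lemma}, the path separation forces any two such vertices in different pieces to be far apart.
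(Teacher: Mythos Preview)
Your overall architecture---bulk with $\{1,2,3,4\}$ via Lemma~\ref{lemma}, then a distinguished set via Lemma~\ref{disjoint cycles}---matches the paper's, but there is a genuine gap in step~(ii). You propose to put ``the union of the distinguished shortest paths and internal-face cycles'' into the cycle-distance-preserved decomposition. This cannot yield only $3r$ sets: an internal face can have arbitrarily many chords (its vertex in $\mathcal{T}_G$ can have arbitrarily high degree), so the number of path-type pieces, and hence the number of shortest paths, is \emph{not} bounded in terms of $r$. If each path needs its own cycle-distance-preserved set, you would need unboundedly many, and Lemma~\ref{disjoint cycles} gives no useful bound.

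The paper resolves this by colouring \emph{all} the shortest paths (however many) simultaneously with the \emph{same} eleven colours $\{5,\ldots,15\}$ using Pattern~\eqref{pat5-15}, started at the third vertex of each path. Lemma~\ref{out2} guarantees that a colour $i$ on one path is at distance at least $\lceil(i-5)/2\rceil$ from the path's root, and a rooted orientation of $\mathcal{T}_G$ ensures each branching point has in-degree at most one; together these force any two occurrences of colour $i$ on different paths to be at distance $>i$. What then remains uncoloured is only the \emph{local} structure at each internal face: the face $F_i$ itself, its neighbourhood $B_i$, and the second-layer set $D_i$ of path vertices at distance~$2$. Each of these three sets is cyclically ordered around $F_i$ and is cycle-distance-preserved, giving exactly $3r$ sets. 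Lemma~\ref{disjoint cycles} is then applied with $k=16$ (not $k=2$ or $3$), which is where the factor $17=k+1$ comes from: colours $1$--$4$ go to the bulk, $5$--$15$ to the paths, and $16$ onward to the $3r$ cycle-distance-preserved sets.
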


\begin{proof}
Let $F_1, \ldots, F_r$ denote the $r$ distinct internal faces of $G$, and $u_{F_1}, \dots u_{F_r}$ the corresponding vertices of $\mathcal{T}_G$ (note that each $u_{F_i}$ has degree at least $3$ in $\mathcal{T}_G$). Figure~\ref{figadd1} illustrates a subcubic outerplanar graph and its dual.
Then, removing the vertices $u_{F_1}, \dots, u_{F_r}$ from $\mathcal{T}_G$, we obtain a union of disjoint paths. 
The connected components with one end vertex adjacent in $\mathcal{T}_G$ to $u_{F_i}$ and the other end vertex adjacent in $\mathcal{T}_G$ to $u_{F_j}$ are denoted by $U_{i,j}$. 
In the case $u_{F_i}$ and $u_{F_j}$ are adjacent, $U_{i,j}$ is not defined (all the vertices in the face "between" $F_i$ and $F_j$ are colored in Step 3).
For any $u_{F_i}$, $i=1,\dots, r$, the paths with one end vertex of degree $1$ and the other one adjacent to $u_{F_i}$, are denoted by $U_{i}^{1}, \ldots, U_{i}^{\ell_i}$, where $\ell_i$ is the number of such paths for $u_{F_i}$. Note 
that some of the paths $U_{i,j}$,$U_{i}^{q}$ may be trivial or empty. 

\begin{figure}[ht]
\begin{center}
\begin{tikzpicture}[scale=0.8]
\draw (-4,0) -- (4,0);
\draw (-4,1.5) -- (4,1.5);
\draw (-1,0) -- (0,1.5);
\draw (1,0) -- (1,1.5);
\draw (-2,0) -- (-3,1.5);
\draw (2,1.5) -- (3,0);
\draw (4,0) -- (4,1.5);
\draw (-4,0) -- (-4,1.5);

\draw (4,0) -- (4.8,-0.7);
\draw (6.2,-0.7) -- (4.8,-0.7);
\draw (6.2,-0.7) -- (7,0);
\draw (7,0) -- (7,1.5);
\draw (7,1.5) -- (6.2,2.2);
\draw (6.2,2.2) -- (4.8,2.2);
\draw (4.8,2.2) -- (4,1.5);

\draw (-4,0) -- (-4.8,-0.7);
\draw (-6.2,-0.7) -- (-4.8,-0.7);
\draw (-6.2,-0.7) -- (-7,0);
\draw (-7,0) -- (-7,1.5);
\draw (-7,1.5) -- (-6.2,2.2);
\draw (-6.2,2.2) -- (-4.8,2.2);
\draw (-4.8,2.2) -- (-4,1.5);

\draw (-4.8,2.2) -- (-4.8,3);
\draw (-6.2,2.2) -- (-6.2,3);
\draw (-4.8,3) -- (-5.5,3.7);
\draw (-6.2,3) -- (-5.5,3.7);
\draw (-4.8,3) -- (-6.2,3);

\draw (4.8,2.2) -- (5.5,3);
\draw (6.2,2.2) -- (5.5,3);
\draw (7,1.5) -- (11,1.5);
\draw (7,0) -- (11,0);
\draw (8,0) -- (9,1.5);
\draw (11,0) -- (10,1.5);
\draw (11,0) -- (11,1.5);

\draw (-7,0) -- (-7.3,-1);
\draw (-6.2,-0.7) -- (-7.3,-1);

\draw (4.8,-0.7) -- (4.8,-1.6);
\draw (6.2,-0.7) -- (6.2,-1.6);
\draw (4.8,-1.6) -- (6.2,-1.6);

\draw[ultra thick, style=dashed,color=red] (0,1.5) -- (2,1.5);
\draw[ultra thick, style=dashed,color=red] (-1,0) -- (0,1.5);
\draw[ultra thick, style=dashed,color=red] (-1,0) -- (-2,0);
\draw[ultra thick, style=dashed,color=red] (-2,0) -- (-3,1.5);
\draw[ultra thick, style=dashed,color=red] (2,1.5) -- (3,0);

\draw[ultra thick, style=dashed,color=red] (-5.5,3.7) -- (-6.2,3);

\draw[ultra thick, style=dashed,color=red] (8,0) -- (9,1.5);
\draw[ultra thick, style=dashed,color=red] (9,1.5) -- (11,1.5);

\node at (-4,0) [circle,draw=black,fill=black,scale=0.5]{};
\node at (-3,0) [circle,draw=black,fill=black,scale=0.5]{};
\node at (-2,0) [circle,draw=black,fill=black,scale=0.5]{};
\node at (-1,0) [circle,draw=black,fill=black,scale=0.5]{};
\node at (0,0) [circle,draw=black,fill=black,scale=0.5]{};
\node at (1,0) [circle,draw=black,fill=black,scale=0.5]{};
\node at (2,0) [circle,draw=black,fill=black,scale=0.5]{};
\node at (4,0) [circle,draw=black,fill=black,scale=0.5]{};
\node at (3,0) [circle,draw=black,fill=black,scale=0.5]{};
\node at (-4,1.5) [circle,draw=black,fill=black,scale=0.5]{};
\node at (-3,1.5) [circle,draw=black,fill=black,scale=0.5]{};
\node at (-2,1.5) [circle,draw=black,fill=black,scale=0.5]{};
\node at (-1,1.5) [circle,draw=black,fill=black,scale=0.5]{};
\node at (0,1.5) [circle,draw=black,fill=black,scale=0.5]{};
\node at (1,1.5) [circle,draw=black,fill=black,scale=0.5]{};
\node at (2,1.5) [circle,draw=black,fill=black,scale=0.5]{};
\node at (4,1.5) [circle,draw=black,fill=black,scale=0.5]{};
\node at (3,1.5) [circle,draw=black,fill=black,scale=0.5]{};

\node at (4.8,-0.7) [circle,draw=black,fill=black,scale=0.5]{};
\node at (6.2,-0.7) [circle,draw=black,fill=black,scale=0.5]{};
\node at (7,0) [circle,draw=black,fill=black,scale=0.5]{};
\node at (7,1.5) [circle,draw=black,fill=black,scale=0.5]{};
\node at (6.2,2.2) [circle,draw=black,fill=black,scale=0.5]{};
\node at (4.8,2.2) [circle,draw=black,fill=black,scale=0.5]{};

\node at (-4.8,-0.7) [circle,draw=black,fill=black,scale=0.5]{};
\node at (-6.2,-0.7) [circle,draw=black,fill=black,scale=0.5]{};
\node at (-7,0) [circle,draw=black,fill=black,scale=0.5]{};
\node at (-7,1.5) [circle,draw=black,fill=black,scale=0.5]{};
\node at (-6.2,2.2) [circle,draw=black,fill=black,scale=0.5]{};
\node at (-4.8,2.2) [circle,draw=black,fill=black,scale=0.5]{};

\node at (5.5,3) [circle,draw=black,fill=black,scale=0.5]{};

\node at (-4.8,3) [circle,draw=black,fill=black,scale=0.5]{};
\node at (-6.2,3) [circle,draw=black,fill=black,scale=0.5]{};
\node at (-5.5,3.7) [circle,draw=black,fill=black,scale=0.5]{};

\node at (8,1.5) [circle,draw=black,fill=black,scale=0.5]{};
\node at (9,1.5) [circle,draw=black,fill=black,scale=0.5]{};
\node at (10,1.5) [circle,draw=black,fill=black,scale=0.5]{};
\node at (11,1.5) [circle,draw=black,fill=black,scale=0.5]{};
\node at (8,0) [circle,draw=black,fill=black,scale=0.5]{};
\node at (9,0) [circle,draw=black,fill=black,scale=0.5]{};
\node at (10,0) [circle,draw=black,fill=black,scale=0.5]{};
\node at (11,0) [circle,draw=black,fill=black,scale=0.5]{};

\node at (-7.3,-1) [circle,draw=black,fill=black,scale=0.5]{};

\node at (4.8,-1.6) [circle,draw=black,fill=black,scale=0.5]{};
\node at (6.2,-1.6) [circle,draw=black,fill=black,scale=0.5]{};

\node at (-5.5,0.75) {\small{$F_1$}};
\node at (5.5,0.75) {\small{$F_2$}};

\node at (-5.5,3.25) {\tiny{$\hat F_{1}^{1}$}};
\node at (-6.8,-0.55) {\tiny{$\hat F_{1}^{2}$}};

\node at (5.5,2.55) {\tiny{$\hat F_{2}^{1}$}};
\node at (10.7,1) {\tiny{$\hat F_{2}^{2}$}};
\node at (5.5,-1.2) {\tiny{$\hat F_{2}^{3}$}};

\node at (-5.2,3.8) {\tiny{$y^1_1$}};
\node at (-6.9,-1.1) {\tiny{$y^2_1$}};

\node at (5.8,3.1) {\tiny{$y^1_2$}};
\node at (11.2,1.8) {\tiny{$y^2_2$}};
\node at (6.5,-1.8) {\tiny{$y^3_2$}};

\node at (8.2,-0.3) {\tiny{$(p_2^2)_1$}};
\node at (9.2,1.8) {\tiny{$(p_2^2)_2$}};
\node at (10.2,1.8) {\tiny{$(p_2^2)_3$}};

\node at (-6.7,3) {\tiny{$(p_1^1)_1$}};

\node at (-2.8,1.8) {\tiny{$p_{1,2}^1$}};
\node at (-1.8,-0.3) {\tiny{$p_{1,2}^2$}};
\node at (-0.8,-0.3) {\tiny{$p_{1,2}^3$}};
\node at (0.2,1.8) {\tiny{$p_{1,2}^4$}};
\node at (1.2,1.8) {\tiny{$p_{1,2}^5$}};
\node at (2.2,1.8) {\tiny{$p_{1,2}^6$}};
\node at (3.2,-0.3) {\tiny{$p_{1,2}^7$}};

\draw[dotted,color=blue] (5.5,0.75) circle (2.75cm);
\draw[color=red] (2,1.5) circle (0.7cm);
\draw[color=red] (9,1.5) circle (0.7cm);
\draw[dotted,color=blue] (-5.5,0.75) circle (2.75cm);
\draw[color=red] (-2,0) circle (0.7cm);
\draw[color=red] (-5.5,3.7) circle (0.7cm);
\end{tikzpicture}
\end{center}
\begin{center}
\begin{tikzpicture}[scale=1.6]
\draw (-3,0) -- (3,0);

\draw[->,>=stealth] (-3,0) -- (-3.25,0.25);
\draw (-3.25,0.25) -- (-3.5,0.5);

\draw[->,>=stealth] (-3.5,0.5) -- (-3.75,0.75);
\draw(-3.75,0.75) -- (-4,1);
\draw[->,>=stealth] (-3,0) -- (-3.25,-0.25);
\draw (-3.25,-0.25) -- (-3.5,-0.5);

\draw (-3,0) -- (5,0);

\draw[->,>=stealth] (-3,0) -- (-2.5,0);
\draw[->,>=stealth] (-2,0) -- (-1.5,0);

\draw[->,>=stealth] (-1,0) -- (-0.5,0);
\draw[->,>=stealth] (0,0) -- (0.5,0);
\draw[->,>=stealth] (1,0) -- (1.5,0);
\draw[->,>=stealth] (2,0) -- (2.5,0);
\draw[->,>=stealth] (3,0) -- (3.35,0);
\draw[->,>=stealth] (3.7,0) -- (4.05,0);
\draw[->,>=stealth] (4.4,0) -- (4.75,0);
\draw[->,>=stealth] (3,0) -- (3.25,0.25);
\draw (3.25,0.25) -- (3.5,0.5);
\draw[->,>=stealth] (3,0) -- (3.25,-0.25);
\draw(3.25,-0.25) -- (3.5,-0.5);

\node at (-3,0) [circle,draw=black,fill=black,scale=0.5]{};
\node at (-2,0) [circle,draw=black,fill=black,scale=0.5]{};
\node at (-1,0) [circle,draw=black,fill=black,scale=0.5]{};
\node at (3,0) [circle,draw=black,fill=black,scale=0.5]{};
\node at (2,0) [circle,draw=black,fill=black,scale=0.5]{};
\node at (1,0) [circle,draw=black,fill=black,scale=0.5]{};
\node at (0,0) [circle,draw=black,fill=black,scale=0.5]{};
\node at (-3.5,-0.5) [circle,draw=black,fill=black,scale=0.5]{};
\node at (-3.5,0.5) [circle,draw=black,fill=black,scale=0.5]{};
\node at (-4,1) [circle,draw=black,fill=black,scale=0.5]{};

\node at (3.5,-0.5) [circle,draw=black,fill=black,scale=0.5]{};
\node at (3.5,0.5) [circle,draw=black,fill=black,scale=0.5]{};
\node at (3.7,0) [circle,draw=black,fill=black,scale=0.5]{};
\node at (4.4,0) [circle,draw=black,fill=black,scale=0.5]{};
\node at (5.1,0) [circle,draw=black,fill=black,scale=0.5]{};

\node at (-2.7,-0.3) {\small{$u_{F_1}=z$}};
\node at (2.9,-0.3) {\small{$u_{F_2}$}};
\node at (0.65,-0.5) {\small{$\overrightarrow{\mathcal{T}_{G}}$}};

\end{tikzpicture}
\end{center}
\caption{\label{figadd1} A subcubic outerplanar graph $G$ with two internal faces (on the top) and its oriented dual (on the bottom) (dashed lines : paths $P_{1,2}$, $P^{1}_1$ and $P^2_2$; vertex in dotted circle : vertex in $B_1\cup B_2$ or $V(F_1)\cup V(F_2)$; vertex in simple circle: vertex in $D_1\cup D_2$).}
\end{figure}
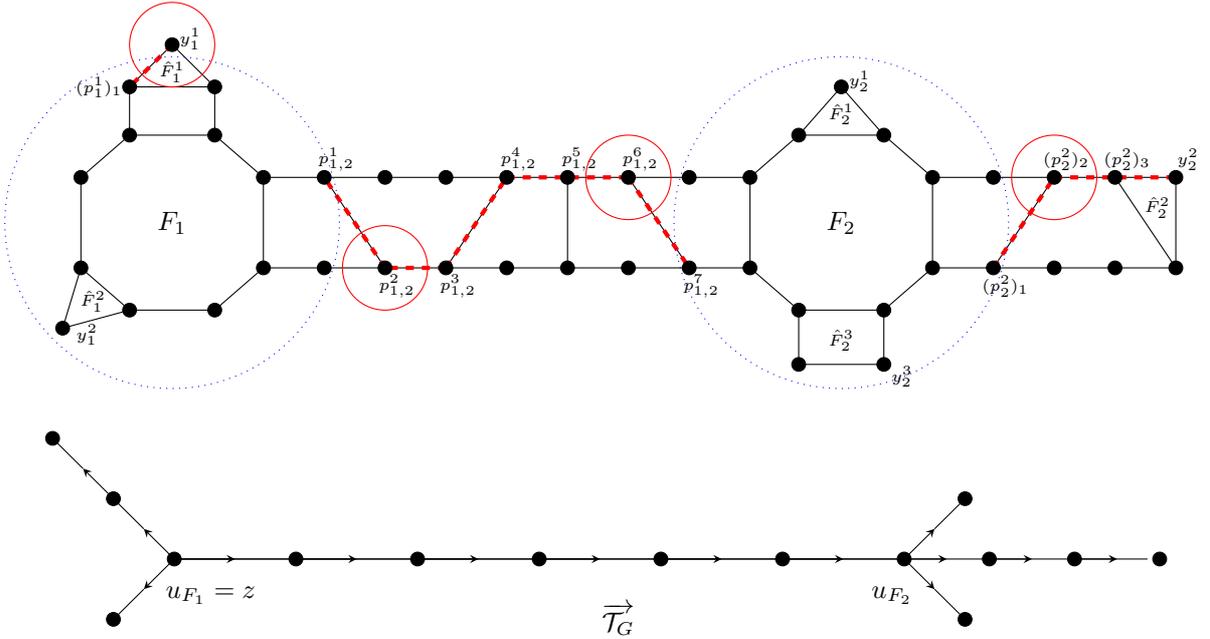


For any $i$, $1\le i\le r$, let $B_i=N(F_i)\setminus \bigcup_{i=1}^r V(F_i)$, and let $B=\bigcup_{i=1}^r (F_i\cup B_i)$.
Let $i$ and $q$ be integers such that $1\le i\le r$, $1\le q\le \ell_i$. Consider an end face $\hat F_{i}^{q}$ in $G$ corresponding to an end vertex of $V(U_{i}^{q})$ of degree 1 in $\mathcal{T}_G$. We denote by $y_i^{q}$ a vertex of $\hat F_{i}^{q}$ of degree $2$ (note that such a vertex always exists) and by $P_i^{q}$ a shortest $(B_i,y_i^{q})$-path in $G$. Let $(p_{i}^{q})_1,(p_{i}^{q})_2,\ldots$ denote vertices of $P_i^{q}$ in an ordering starting from the vertex of $B_i$. 

Choose any vertex $z$ of $\mathcal{T}_G$ such that $z$ has degree $3$ in $\mathcal{T}_G$ and let $\overrightarrow{\mathcal{T}_G}$ be the digraph obtained from $\mathcal{T}_G$ by replacing each edge $uv\in E(\mathcal{T}_G)$ satisfying $d_{\mathcal{T}_G}(u,z)<d_{\mathcal{T}_G}(v,z)$ with an arc from $u$ to $v$.

Now we consider the paths $U_{i,j}$ in $\mathcal{T}_G$. Let $i,j$ be positive integers such that $1\le i<j\le r$ and $U_{i,j}$ is defined and has length at least one. Let $P_{i,j}$ be a shortest $(B_i, B_j)$-path and $p_{i,j}^1,p_{i,j}^2 \ldots,p_{i,j}^{l_{i,j}}$ its vertices in an ordering starting from the vertex of $B_i$, if $d_{\mathcal{T}_G}(u_{F_i},z)<d_{\mathcal{T}_G}(u_{F_j},z)$, or from the vertex of $B_j$ otherwise (by $l_{i,j}$ we mean the order of $P_{i,j}$).
The upper part of Figure \ref{figadd1} illustrates the notations used in this proof.

Let $P =\left(\bigcup\limits_{1\le i\le r} \bigcup\limits_{1\le q\le \ell_i } V(P_{i}^{q})\right)\,\, \bigcup \,\, \left(\bigcup\limits_{1\le i<j\le k} V(P_{i,j})\right)$.

\begin{mathitem}

\item[{\sl Step 1:}] Colouring the vertices of $V(G)\setminus (B\cup P)$ with colours $\{1,2,3,4\}$.

We colour the vertices of $V(G)\setminus (B\cup P)$ by colouring  each connected component (one by one) of $G- (B\cup P)$ in the same way as in the proof of Lemma \ref{lemma}, i.e., we use the pattern $1,2,1,3$.
Note that the distance between any two vertices from $V(G)\setminus (B\cup P)$ in two different connected components of $G-B$ is at least $5$.
Moreover, we proceed as in the proof of Lemma \ref{lemma} to avoid clashing vertices of colour $3$, i.e., we use colour $4$.

\item[{\sl Step 2:}] Colouring vertices of $P$.

Let $i$, $j$, $i'$ and $q$ be integers such that $U_{i,j}$ and $U_{i'}^{q}$ are defined.

For the vertices of $P_{i'}^{q}$, we use Pattern~\eqref{pat5-15} and Lemma \ref{out2} starting at the vertex $(p_{i'}^{q})_3$. For the vertices of $P_{i,j}$, we use Pattern~\eqref{pat5-15} and Lemma \ref{out2}, starting at the vertex $p_{i,j}^3$ and finishing at the vertex $p_{i,j}^{k_{i,j}-3}$.
Note that every vertex of $V(\overrightarrow{\mathcal{T}_G})$ has in-degree at most one. This property, along with Lemma \ref{out2}, ensure us that a vertex coloured with colour $a$ in $P_{i,j}$, $a\in\{5,\ldots,15\}$, is at distance at least $a+1$ from any other vertex coloured by $a$ in $P_{\bar i,\bar j}$, for $1\le \bar i< \bar j\le r$.

\item[{\sl Step 3:}] Colouring the remaining vertices of $G$.

Let $w_{i,j}$ be a vertex among $\{p_{i,j}^{2} ,p_{i,j}^{l_{i,j}-2}\}$ at distance $2$ from a vertex of $V(F_i)$ (when $U_{i,j}$ is defined).
Let $D_i$ be the  set $\{(p_{i}^{A})_2|\ 1\le A\le \ell_i\}\cup \{w_{i,j}|\ U_{i,j} \text{ is defined,}\ 1\le j\le k\}$. Since the sets $V(F_i)$, $B_i$ and $D_i$ , $1\le i\le r$, are cycle-distance-preserved sets, the set $\bigcup_{i=1}^r V(F_i)\bigcup_{i=1}^r B_i \bigcup_{i=1}^r D_i$ is decomposable into $3r$ cycle-distance-preserved sets. Hence, using Lemma \ref{disjoint cycles}, the remaining uncoloured vertices can be coloured with colours $\{16,\ldots,17\times  6^{3r}-2 \}$.
\end{mathitem}
\end{proof}

Sloper in \cite{Slo} defined an {\em expandable broadcast-colouring} of a complete binary tree $T$ as a colouring $c$ of $V(T)$ with colours $1,2,\dots, 7$ such that:

\begin{mathitem}
\item $\forall u,v\in V(T)$ $c(u)=c(v) \Rightarrow d_T(u,v)>c(u)$,
\item the root $x$ of $T$ has colour $1$,
\item all vertices at even distance from $x$ have colour $1$,
\item every vertex of colour $1$ has at least one child of colour $2$ or $3$,
\item $c(u)=6, c(v)=7 \Rightarrow d_T(u,v)\ge 5$,
\item $c(u)\in \{4,5,6,7\} \Rightarrow$ $u$'s children each have children coloured with $2$ and $3$.
\end{mathitem}

Notice that an expandable broadcast-colouring of a tree is a packing 7-colouring. Sloper has shown that given an expandable colouring of a (complete) binary tree of height $n$, it is possible to create an expandable colouring of a (complete) binary tree of height
$(n + 1)$ by using the colouring for the tree of height $n$ as a basis~\cite{Slo}.
Note that the colouring of a complete binary tree of height $3$ that consists in giving the colours $2$ and $3$ to the two neighbours of $x$, giving the colour $1$ to the vertex at distance $2$ from $x$, and giving colours from $\{2,3,4,5\}$ to the remaining vertices is an expendable broadcast-colouring. This colouring is described in Figure~\ref{figadd2}.

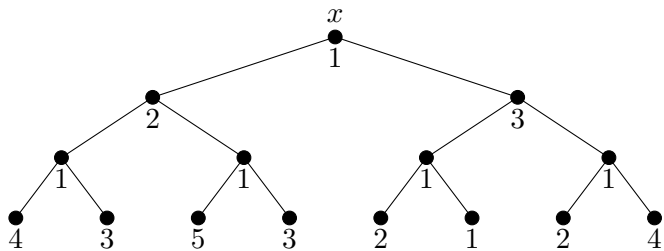
\begin{figure}[ht]
\begin{center}
\begin{tikzpicture}[scale=0.8]
\draw (0,3) -- (-3,2);
\draw (0,3) -- (3,2);
\draw (-3,2) -- (-4.5,1);
\draw (-3,2) -- (-1.5,1);
\draw (3,2) -- (4.5,1);
\draw (3,2) -- (1.5,1);
\draw (-4.5,1) -- (-5.25,0);
\draw (-4.5,1) -- (-3.75,0);
\draw (-1.5,1) -- (-2.25,0);
\draw (-1.5,1) -- (-0.75,0);
\draw (4.5,1) -- (5.25,0);
\draw (4.5,1) -- (3.75,0);
\draw (1.5,1) -- (2.25,0);
\draw (1.5,1) -- (0.75,0);

\node at (0,3) [circle,draw=black,fill=black,scale=0.5]{};
\node at (-3,2) [circle,draw=black,fill=black,scale=0.5]{};

\node at (3,2) [circle,draw=black,fill=black,scale=0.5]{};
\node at (-4.5,1) [circle,draw=black,fill=black,scale=0.5]{};
\node at (-1.5,1) [circle,draw=black,fill=black,scale=0.5]{};
\node at (1.5,1) [circle,draw=black,fill=black,scale=0.5]{};
\node at (4.5,1) [circle,draw=black,fill=black,scale=0.5]{};
\node at (-5.25,0) [circle,draw=black,fill=black,scale=0.5]{};
\node at (-3.75,0) [circle,draw=black,fill=black,scale=0.5]{};
\node at (-2.25,0) [circle,draw=black,fill=black,scale=0.5]{};
\node at (-0.75,0) [circle,draw=black,fill=black,scale=0.5]{};
\node at (5.25,0) [circle,draw=black,fill=black,scale=0.5]{};
\node at (3.75,0) [circle,draw=black,fill=black,scale=0.5]{};
\node at (2.25,0) [circle,draw=black,fill=black,scale=0.5]{};
\node at (0.75,0) [circle,draw=black,fill=black,scale=0.5]{};
\node at (0,3.35) {$x$};
\node at (0,2.65) {$1$};
\node at (-3,1.65) {$2$};
\node at (3,1.65) {$3$};
\node at (-4.5,0.65) {$1$};
\node at (-1.5,0.65) {$1$};
\node at (1.5,0.65) {$1$};
\node at (4.5,0.65) {$1$};
\node at (-5.25,-0.35) {$4$};
\node at (-3.75,-0.35) {$3$};
\node at (-2.25,-0.35) {$5$};
\node at (-0.75,-0.35) {$3$};
\node at (5.25,-0.35) {$4$};
\node at (3.75,-0.35) {$2$};
\node at (2.25,-0.35) {$1$};
\node at (0.75,-0.35) {$2$};

\end{tikzpicture}
\end{center}
\caption{\label{figadd2} An expendable broadcast-colouring of a complete binary tree of height $3$.}
\end{figure}

The following statement is true for a more general class of graphs than in Theorem \ref{k-inner face structure} since it gives an upper bound for all connected outerplanar graphs (not necessarily $2$-connected). However, since the parameter is the number of faces, the bound is weaker than the bound in Theorem \ref{k-inner face structure}.

\begin{theorem} \label{k-face structure}
If $G$ is a connected subcubic outerplanar graph with $r$ (non external) faces, then $\chi_{\rho}(G)\le 9\times 6^{r}-2$.
\end{theorem}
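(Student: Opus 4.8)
The plan is to split the palette into a low part $\{1,\ldots,7\}$, used for the ``tree-like'' vertices, and a high part $\{8,\ldots,9\times 6^{r}-2\}$, used for the vertices lying on the $r$ bounded faces. Since these two colour sets are disjoint, the two colourings can never interfere, so it suffices to produce each of them separately and take their union.

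For the high part I would first observe that each face $F_i$ is a cycle-distance-preserved set in $G$: as $G$ is outerplanar it has no $K_{2,3}$-minor, so two vertices of a common face are joined by no third internally disjoint path, and hence their distance in $G$ equals their distance along the boundary of $F_i$. Consequently $S=\bigcup_{i=1}^{r}V(F_i)$, the set of all vertices incident to a bounded face (equivalently, the vertices lying in some $2$-connected block), is decomposable into the $r$ cycle-distance-preserved sets $V(F_1),\ldots,V(F_r)$, overlaps being harmless as noted in the proof of Lemma~\ref{disjoint cycles}. Applying Lemma~\ref{disjoint cycles} with $k=8$ then colours all of $S$ with $\{8,\ldots,6^{r}(8+1)-2\}=\{8,\ldots,9\times 6^{r}-2\}$.

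It remains to colour $W=V(G)\setminus S$ with $\{1,\ldots,7\}$. Here $G[W]$ is a subcubic forest, since every vertex of $W$ lies on no cycle and so all its incident edges are bridges; moreover, inside a single component of $G[W]$ the $G$-distance coincides with the tree distance, so Sloper's result that subcubic trees are packing $7$-colourable~\cite{Slo} treats each component internally. Two structural facts control how the components interact: a cut vertex of a block carries at most one bridge (it is subcubic and has degree $\ge 2$ inside its block), so distinct $W$-components meeting a given block do so at \emph{distinct} cut vertices; and two attachment vertices of one $W$-component must reach \emph{different} blocks, for otherwise an arc inside a common block together with a path inside the component would close a cycle through them, contradicting $W$-membership. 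From these one deduces that any two vertices lying in distinct components of $G[W]$ are at distance at least $3$ in $G$.

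The hard part is to make this $7$-colouring valid \emph{globally} in $G$, not merely inside each component: because distinct components are only guaranteed to be $3$ apart, independently chosen colourings can still clash in the colours $3,\ldots,7$ across a block. For two equally coloured vertices $u,v$ in different components separated by a block one has $d_G(u,v)\ge a+3+b$, where $a$ (resp.\ $b$) is the distance from $u$ (resp.\ $v$) to the attachment vertex of its component on that block (two bridge edges, plus at least one edge inside the block). Hence it suffices to produce on each component a packing $7$-colouring in which every vertex of colour $k$ lies at distance at least $\lceil (k-2)/2\rceil$ from \emph{every} attachment vertex of that component, for then $a+b\ge k-2$ and $d_G(u,v)>k$. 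I would obtain such colourings from Sloper's expandable broadcast-colouring, rooting each component at an attachment and exploiting the extendability of the scheme (based on the height-$3$ colouring of Figure~\ref{figadd2}, grown only downwards, so that colour $1$ fills the even levels and the colours $4,\ldots,7$ occur only deep), which lets each newly processed component be aligned at its attachment, top-down along the block-cut tree, without disturbing the already-coloured part. Guaranteeing the distance-to-attachment property simultaneously for the several (but always different-block) attachment vertices of a component, while consistently stitching the components across their shared blocks, is the main obstacle of the argument. Combining the two palettes finally yields $\chi_{\rho}(G)\le 9\times 6^{r}-2$.
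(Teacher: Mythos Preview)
Your decomposition is exactly the paper's: colour $S=\bigcup_i V(F_i)$ with $\{8,\ldots,9\cdot 6^r-2\}$ via Lemma~\ref{disjoint cycles}, and colour the forest $W=V(G)\setminus S$ with $\{1,\ldots,7\}$ via Sloper's expandable broadcast-colouring. The high part is fine and matches the paper verbatim.

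The gap is in the low part. You correctly identify the obstacle---cross-component conflicts in colours $3,\ldots,7$---but your proposed sufficient condition, namely that every colour-$k$ vertex be at distance $\ge\lceil(k-2)/2\rceil$ from \emph{every} attachment of its component, is stronger than what the expandable broadcast-colouring delivers. Rooting at one attachment controls the layering near that attachment only; deep in the tree, near a different attachment, a colour $6$ or $7$ can sit right next to the attachment vertex, and your inequality $a+b\ge k-2$ fails. You flag this as ``the main obstacle'' but do not resolve it.

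The paper's resolution is the missing idea: fix one face $F_z$ and root each forest component $O_j$ at its vertex $z_j$ \emph{closest to} $F_z$. This global orientation guarantees that, for every face $F_i$, all but at most one of the $W$-neighbours of $V(F_i)$ are roots $z_j$ (the unique exception $w_i$ being the attachment on the $F_z$-side). Now conflicts around $F_i$ are checked asymmetrically: on one side sits $w_i$ with an arbitrary colour, on all other sides sit roots with the rigid layer pattern (colour~$1$ at depth~$0$, colours $2,3$ at depth~$1$, colour~$1$ at depth~$2$, colours $2,\ldots,5$ at depth~$3$, colours $6,7$ only at depth $\ge 5$). A direct distance count---$w_i$ is at distance $\ge 3$ from each $z_j$, hence $\ge 4$ from depth~$1$, $\ge 6$ from depth~$3$, $\ge 8$ from depth~$5$---then disposes of every colour $2,\ldots,7$. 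So you do not need the symmetric distance-to-every-attachment property; the top-down rooting you mention is the right move, but its payoff is the ``at most one non-root attachment per face'' fact, which is what actually closes the argument.
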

\begin{proof}
Let $F_1, \ldots, F_r$ denote the $r$ (non external) faces of $G$. 
The graph $O=G-\bigcup_{i=1}^r V(F_i)$ consists of components $O_1,O_2, \dots, O_s$ such that each $O_j$ ($j=1,\dots, s$) is a tree. And, since $G$ is subcubic, each $O_j$ is subcubic as well. In the weak dual $\mathcal{T}_G$ of $G$, choose arbitrary vertex $z$ and let $F_z$ denote a face corresponding to $z$ in $G$. We colour the vertices of $G$ in two steps.



\begin{mathitem} 

\item[{\sl Step 1:}] Colouring the vertices of $O$ with colours $\{1, \ldots,7\}$.

Consider each component $O_j$ of $O$ separately ($i=1,\dots, s$) and let $z_j$ denote the vertex of $O_j$ closest to $F_z$. Then we use an expandable broadcast-colouring to colour vertices of $O_j$ with colours $1,2,\dots, 7$ such that the vertices at distance at most $3$ from $z_j$ are coloured as in Figure~\ref{figadd2} (by considering $z_j$ as $x$ in this figure). Using the result of Sloper~\cite{Slo}, it is possible to extend this packing colouring to the graph $O_j$.
Note that $z_j$ has colour $1$, the neighbour(s) of $z_j$ in $O_j$ has (have) colour $2$ (and $3$), vertices of $O_j$ at distance $2$ from $z_j$ are coloured with colour $1$ and vertices of $O_j$ at distance $3$ have colours $2$, $3$, $4$ and $5$. 
Obviously, since $G$ is subcubic, $z_j$ is at distance at least $3$ from any vertex of any $O_\ell\not=O_i$. 
Note that, except one vertex, every vertex of $O$ having a neighbor in $F_i$ is at shortest distance of $F_z$ (compared to the other vertices in the same component of $O$). Consequently, by definition of $z_j$, in every face $F_i$ there is at most one vertex which has a neighbour in $O$ which is not $z_j$ for some $j\in\{1,\ldots,s\}$.

Let $w_i$ be this possible neighbour in $O$. Note that $w_i$ can have any colour among $\{1,\ldots,7\}$.
If $w_i$ has colour in $\{2,3\}$, then the other vertices of colour $2$ or $3$ at close distance from vertices of $F_i$ are the neighbours of the vertices $z_j$, $j\in\{1,\ldots,s\}$, which are at distance $4$ from $w_i$. 
If $w_i$ has colour in $\{4,5\}$, then, also, the vertices of colour $4$ or $5$ at close distance from vertices of $F_i$ are the vertices at distance $3$ (in $O_j$) of the vertices $z_j$, $j\in\{1,\ldots,s\}$, and these vertices are at distance $6$ from $w_i$. 
Finally, since the vertices at even distance of $z_j$ in $O_j$ are coloured with colour $1$, the other vertices of colour $6$ or $7$ are at distance at least $8$ from $w_i$.
Hence, the above defined colouring satisfies the distance constraints of a packing colouring.

\item[{\sl Step 2:}] Colouring the remaining vertices of $G$.

The sets $V(F_1),\ldots, V(F_r)$ are cycle-distance-preserved sets. Hence, by Lemma~\ref{disjoint cycles}, the remaining uncoloured vertices can be coloured with colours $\{8,\ldots, 9\times 6^{r}-2 \}$.
\end{mathitem}
\end{proof}

\section{Some $2$-connected outerplanar graphs with finite packing chromatic number} 

In this section we consider some special classes of subcubic outerplanar graphs for which we can decrease the upper bound on the packing chromatic number given in Theorem \ref{k-inner face structure}.

\begin{theorem} \label{thm star structure}
If $G$ is a $2$-connected subcubic outerplanar graph with exactly one internal face, then $\chi_{\rho}(G)\leq 51$.
\end{theorem}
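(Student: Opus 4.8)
The plan is to follow the three‑stage scheme of Theorem~\ref{k-inner face structure}, but to spend far fewer colours on the central face, exploiting that a single internal face forces the weak dual to be a spider. First I set up the structure: let $F$ be the unique internal face, so $u_F$ is the only vertex of degree at least $3$ in $\mathcal{T}_G$, and $\mathcal{T}_G-u_F$ is a disjoint union of paths $U^1,\dots,U^{\ell}$ with $\ell\ge 3$. Each $U^q$ together with its chord is a $2$‑connected subcubic outerplanar graph with no internal face (a ``branch''), so the machinery of Lemma~\ref{lemma} and Theorem~\ref{out1} applies to it. As in Theorem~\ref{k-inner face structure}, for every branch I fix a shortest $(B_1,y^q)$‑path $P^q$ from the off‑face neighbourhood $B_1=N(F)\setminus V(F)$ to a degree‑$2$ vertex $y^q$ of the branch's end face.

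Next I colour the branches with colours $\{1,\dots,15\}$. Exactly as in Step~1 of Theorem~\ref{k-inner face structure} (which is the argument of Lemma~\ref{lemma}), the off‑path vertices of $V(G)\setminus(V(F)\cup B_1\cup P)$ receive colours from $\{1,2,3,4\}$ via the pattern $1,2,1,3$ together with the colour‑$4$ correction. Then, as in Step~2, I colour the path vertices with colours $\{5,\dots,15\}$ using Pattern~\eqref{pat5-15} and Lemma~\ref{out2}, starting each branch path at its third vertex. Orienting $\mathcal{T}_G$ away from $u_F$ gives in‑degree at most one, which together with Lemma~\ref{out2} guarantees that two equal colours in $\{5,\dots,15\}$ lying on different branch paths remain more than their value apart. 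Since $\{1,\dots,15\}$ is disjoint from the palette used on the centre, these colours can never clash with the central colouring.

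The crux is Step~3: colouring the remaining central vertices, namely the face cycle $V(F)$, its off‑face neighbours $B_1$, and the second‑layer vertices $D=\{(p^q)_2:1\le q\le\ell\}$, using only the $36$ colours $\{16,\dots,51\}$, rather than the exponential range $\{16,\dots,17\times 6^{3}-2\}$ that the generic Lemma~\ref{disjoint cycles} would cost for three cycle‑distance‑preserved sets. The point is that with a single face these three sets sit on concentric rings around $F$: the cycle $V(F)$ itself with unit steps, and the two tail‑layers $B_1,D$ whose vertices are mutually at distance at least $2$ (respectively more) and at distance at least $1$ (respectively $2$) from $V(F)$. I would colour the cycle $V(F)$ by a path‑type colouring of the form of Proposition~\ref{preliminaries}.ii) closed into a cycle as in Lemma~\ref{acycle}, and then slot the sparser layers $B_1$ and $D$ into the same range $\{16,\dots,51\}$, using their larger mutual separations to interleave their colour classes with those of $V(F)$ and keeping the top colours in reserve for the finitely many junction vertices and for the cyclic seam. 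I then verify the packing condition for every pair of equally coloured central vertices, the only nontrivial cross‑distances being those running between tails of distinct branches through $F$.

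The main obstacle is precisely this last step: confining the three central layers to a palette of only $36$ colours while simultaneously (a) closing the cycle $V(F)$ without the costly all‑distinct seam that Lemma~\ref{acycle} would otherwise use at starting colour $16$, and (b) controlling the distances between tail vertices of different branches, which can approach one another through $F$. Steps~1 and~2 are essentially a rerun of Theorem~\ref{k-inner face structure} and should present no difficulty, so the entire improvement from $17\times 6^{3}-2$ down to $51$ is concentrated in this economical central colouring; the careful distance bookkeeping there, rather than any new structural idea, is where the real work lies.
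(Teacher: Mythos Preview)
Your plan mirrors Theorem~\ref{k-inner face structure} correctly through Steps~1 and~2, but Step~3 as you describe it does not go through, and the missing ingredient is a structural idea rather than bookkeeping. With your rigid palette split you must packing-colour the face cycle $V(F)$ using only colours $\{16,\dots,51\}$. By Proposition~\ref{preliminaries}.ii) a path already needs $\{16,\dots,50\}$, and closing it into a cycle via Lemma~\ref{acycle} costs a seam of $3k+2=50$ further colours, pushing you to colour~$100$ before you have touched $B_1$ or $D$. The density you appeal to is barely above~$1$, so there is no slack to interleave two additional layers and absorb the seam; ``slotting $B_1$ and $D$ into the same range'' is not achievable with the tools you cite.

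The paper's proof abandons the disjoint-palette idea: it puts colour~$1$ on every second vertex of the face and colour~$2$ on every fourth vertex, and also forces colour~$1$ onto the second path vertex $p_i^2$ of every branch. Thus only one quarter of the face vertices need large colours (they receive $29$--$45$ via three residue classes modulo~$12$), and only the first path vertices $p_i^1$ need $16$--$28$ and $46$--$51$. The price for this reuse is that colours $1$--$4$ now live both on the face and in the branches, so collisions between the branch pattern and the face/$p_i^2$ assignments must be repaired by hand; this is the content of the paper's Steps~2--4, with eight local configurations and a colour-$4$ correction. Your outline avoids that case analysis by keeping the palettes disjoint, but that very separation is what makes $51$ unattainable. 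To rescue your approach you would need, at minimum, to allow colours $1$ and $2$ back onto $V(F)$ and then confront the resulting branch/face interactions.
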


\begin{proof}
Suppose $G$ is a $2$-connected subcubic outerplanar graph with exactly one internal face.
Let $C$ denote the boundary cycle of $G$ and $F$ the internal face of $G$.
Let $C'=\{v_0,\ldots, v_{N-1}\}$ denote the set of vertices which belong to $F$, with $v_i$ adjacent to $v_{i+1}$, for $0\le i<N$. When $N$ is odd, we suppose that $v_{N-1}$ is a vertex with $d_G(v_{N-1})=2$. 
Such a vertex exists since the number of vertices of degree 3 in $C'$ is even.
By removing the edges of $C\cap F$ from $G$, and by removing the isolated vertices from the resulting graph, we obtain a graph $G'$ which is a disjoint union of $2$-connected outerplanar graphs having no internal face.

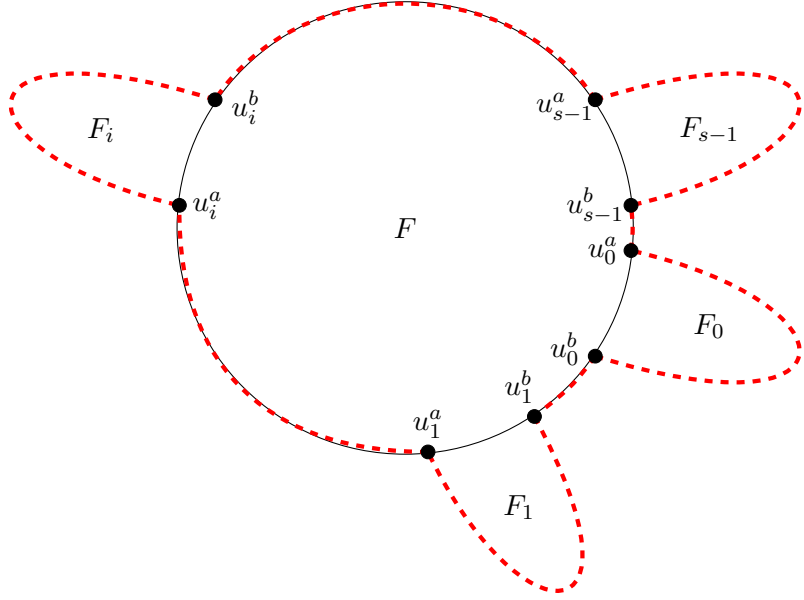
\begin{figure}[t]
\begin{center}
\begin{tikzpicture}
\draw (0,0) circle (3cm);
\draw [ultra thick, style=dashed,color=red]  (2.5,1.7) .. controls (6,2.8) and (6,1) .. (2.97,0.3);
\draw [ultra thick, style=dashed,color=red] (2.5,-1.7) .. controls (6,-2.8) and (6,-1) .. (2.97,-0.3);
\draw [ultra thick, style=dashed,color=red] (1.7,-2.5) .. controls (3.3,-5.5) and (1.5,-5.5) .. (0.3,-2.97);
\draw [ultra thick, style=dashed,color=red] (-2.5,1.7) .. controls (-6,2.8) and (-6,1) .. (-2.97,0.3);
\draw [ultra thick, style=dashed,color=red]  (2.97,-0.3) .. controls (3,0) .. (2.97,0.3);
\draw [ultra thick, style=dashed,color=red]  (2.5,-1.7) .. controls (2.2,-2.1) .. (1.7,-2.5);
\draw [ultra thick, style=dashed,color=red]  (0.3,-2.97) .. controls (-1.7,-2.97) and (-3 ,-1.95) .. (-2.97,0.3);
\draw [ultra thick, style=dashed,color=red]  (2.5,1.7) .. controls (1.4,3.4) and (-1.4,3.4) .. (-2.5,1.7);
\node at (0,0) {$F$};
\node at (4,1.3) {$F_{s-1} $};
\node at (4,-1.3) {$F_0 $};
\node at (1.5,-3.7) {$F_1 $};
\node at (-4,1.3) {$F_i $};
\node at (2.1,-1.6) {$u_0^b$};
\node at (2.5,-1.7) [circle,draw=black,fill=black,scale=0.5]{};
\node at (2.5,0.3) {$u_{s-1}^b$};
\node at (2.97,0.3) [circle,draw=black,fill=black,scale=0.5]{};
\node at (2.6,-0.3) {$u_0^a$};
\node at (2.97,-0.3) [circle,draw=black,fill=black,scale=0.5]{};
\node at (2.1,1.6) {$u_{s-1}^a$};
\node at (2.5,1.7) [circle,draw=black,fill=black,scale=0.5]{};

\node at (1.5,-2.1) {$u_1^b$};
\node at (1.7,-2.5) [circle,draw=black,fill=black,scale=0.5]{};
\node at (0.3,-2.6) {$u_1^a$};
\node at (0.3,-2.97) [circle,draw=black,fill=black,scale=0.5]{};

\node at (-2.1,1.6) {$u_i^b$};
\node at (-2.5,1.7) [circle,draw=black,fill=black,scale=0.5]{};
\node at (-2.6,0.3) {$u_i^a$};
\node at (-2.97,0.3) [circle,draw=black,fill=black,scale=0.5]{};
\end{tikzpicture}
\end{center}
\caption{A $2$-connected outerplanar graph with one internal face and its different subgraphs ($C$ is represented by a dashed line).}
\label{out11}
\end{figure}

Let $F_0$, $\ldots$, $F_{s-1}$ denote the $2$-connected components of $G'$, enumerated in the clockwise order along the cycle $C$ in $G$ (for details, see Fig.~\ref{out11}). Note that, since any $F_i$ contains no internal face, each $F_i$ has exactly two end faces or $F_i$ is a cycle.
Let $i$ be an integer with $0\le i< s$, and let $u_i^a$ and $u_i^b$ denote the two adjacent vertices of degree $3$ in $G$ which belong to $V(F_i)\cap C'$, as it is depicted in Fig.~\ref{out11}.
Let $y_i$ be a vertex of degree $2$ in the end face of $F_i$ which does not contain $u_i^a$ (for $F_i$ a cycle we denote by $y_i$ a vertex of $F_i$ at maximum distance from $F$ in $G$).
Let $x_i\in \{u_i^a, u_i^b\}$ denote a vertex at minimal distance from $y_i$. 
Finally, let $P_i$ be a shortest $(x_i,y_i)$-path in $G$. We further denote the vertices of each $P_i$ by $x_i, p_i^1,p_i^2, \ldots, y_i$ in an ordering starting from $x_i$. Let $D_i^1$, $\ldots$, $D_i^{k_i}$ denote the connected components of $F_i-P_i$ with $D_i^1$ containing a vertex among $u_i^a$ and  $u_i^b$ and with $D_i^{k}$ being at larger distance than $D_{i}^{k-1}$ from $x_i$, $2\le k\le k_i$.

The proof will be organized as follows. First, we will colour the vertices of $C'$. Second, we will colour the vertices of $\cup_{0\le i < s} F_i-P_i$ with colour $1$, $2$ and $3$. Note that the obtained colouring does not necessarily satisfy the distance constraints of a packing colouring of $G$. Third, we will modify colouring of some vertices of $F_i-P_i$ ($i=0,\dots, s-1$) to save colour $1$ for some vertices of the paths $P_i$ and to prevent collisions in colour $2$. Fourth, we will recolour some vertices of $F_0,\dots, F_{s-1}$ with colour $4$ in order to satisfy the distance constraints of a packing colouring. Finally, we will colour vertices of the paths $\cup_{0\le i < s} P_i\setminus \{x_i\}$.

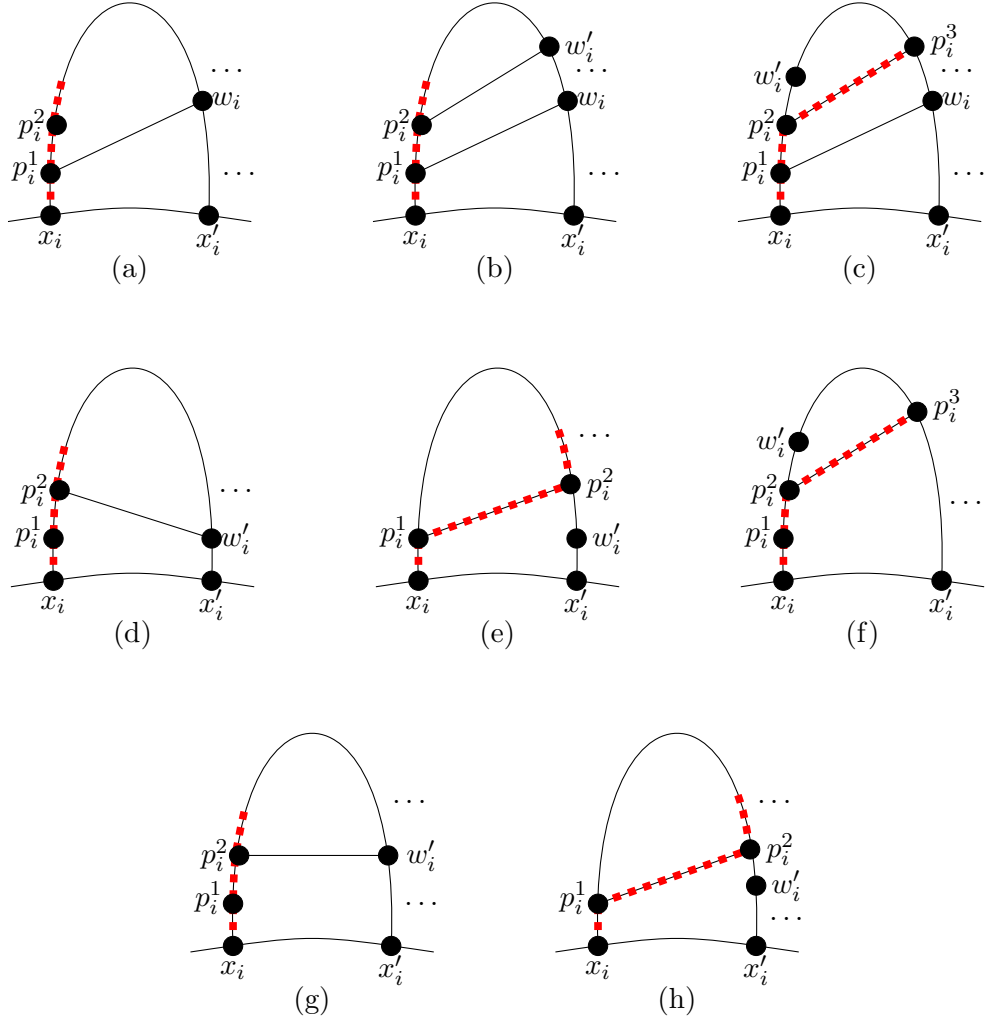
\begin{figure}[t]

\begin{center}

\begin{tikzpicture}[scale=0.8]
\draw  (-2,0) .. controls (0,0.3) .. (2,0);
\draw  (-1.3,0.1) .. controls (-1.5,4.8) and (1.5,4.8) .. (1.3,0.1);
\draw  (-1.3,0.8) -- (1.2,2);
\draw [line width=1mm, style=dashed,color=red] (-1.3,0.1) .. controls (-1.3,1.5) .. (-1.1,2.4);
\node at (-1.3,0.1) [circle,draw=black,fill=black,scale=0.7]{};
\node at (1.3,0.1) [circle,draw=black,fill=black,scale=0.7]{};
\node at (-1.3,0.8) [circle,draw=black,fill=black,scale=0.7]{};
\node at (1.2,2) [circle,draw=black,fill=black,scale=0.7]{};
\node at (-1.2,1.6) [circle,draw=black,fill=black,scale=0.7]{};
\node at (-1.3,-0.3) {$x_i $};
\node at (1.3,-0.3) {$x'_i $};

\node at (-1.6,1.6) {$p_i^2$};
\node at (-1.7,0.9) {$p_i^1$};
\node at (1.6,2) {$w_i $};
\node at (1.6,2.5) {$\ldots$};
\node at (1.8,0.8) {$\ldots$};
\node at (0,-0.8) {(a)};

\draw  (10,0) .. controls (12,0.3) .. (14,0);
\draw  (10.7,0.1) .. controls (10.5,4.8) and (13.5,4.8) .. (13.3,0.1);
\draw  (10.7,0.8) -- (13.2,2);
\draw  (10.8,1.6) -- (12.9,2.9);
\draw [line width=1mm, style=dashed,color=red] (10.7,0.1) .. controls (10.7,0.8) .. (10.75,1.6);
\draw [line width=1mm, style=dashed,color=red] (10.8,1.6) -- (12.9,2.9);
\node at (10.7,0.1) [circle,draw=black,fill=black,scale=0.7]{};
\node at (13.3,0.1) [circle,draw=black,fill=black,scale=0.7]{};
\node at (10.7,0.8) [circle,draw=black,fill=black,scale=0.7]{};
\node at (13.2,2) [circle,draw=black,fill=black,scale=0.7]{};
\node at (10.8,1.6) [circle,draw=black,fill=black,scale=0.7]{};
\node at (10.95,2.4) [circle,draw=black,fill=black,scale=0.7]{};
\node at (12.9,2.9) [circle,draw=black,fill=black,scale=0.7]{};
\node at (10.7,-0.3) {$x_i $};
\node at (13.3,-0.3) {$x'_i $};
\node at (10.5,2.4) {$w'_i $};
\node at (10.4,1.6) {$p_i^2$};
\node at (10.3,0.9) {$p_i^1$};
\node at (13.6,2) {$w_i $};
\node at (13.4,3) {$p_i^3$};
\node at (13.6,2.5) {$\ldots$};
\node at (13.8,0.8) {$\ldots$};
\node at (12,-0.8) {(c)};

\draw  (4,0) .. controls (6,0.3) .. (8,0);
\draw  (4.7,0.1) .. controls (4.5,4.8) and (7.5,4.8) .. (7.3,0.1);
\draw  (4.7,0.8) -- (7.2,2);
\draw  (4.8,1.6) -- (6.9,2.9);
\draw [line width=1mm, style=dashed,color=red] (4.7,0.1) .. controls (4.7,1.5) .. (4.9,2.4);
\node at (4.7,0.1) [circle,draw=black,fill=black,scale=0.7]{};
\node at (7.3,0.1) [circle,draw=black,fill=black,scale=0.7]{};
\node at (4.7,0.8) [circle,draw=black,fill=black,scale=0.7]{};
\node at (4.8,1.6) [circle,draw=black,fill=black,scale=0.7]{};
\node at (7.2,2) [circle,draw=black,fill=black,scale=0.7]{};
\node at (6.9,2.9) [circle,draw=black,fill=black,scale=0.7]{};
\node at (4.7,-0.3) {$x_i $};
\node at (7.3,-0.3) {$x'_i $};
\node at (4.3,0.9) {$p_i^1$};
\node at (4.4,1.6) {$p_i^2$};
\node at (7.6,2.5) {$\ldots$};
\node at (7.6,2) {$w_i $};
\node at (7.4,2.9) {$w'_i $};
\node at (7.8,0.8) {$\ldots$};
\node at (6,-0.8) {(b)};
\end{tikzpicture}
\begin{tikzpicture}[scale=0.8]
\draw  (-2,0) .. controls (0,0.3) .. (2,0);
\draw  (-1.3,0.1) .. controls (-1.5,4.8) and (1.5,4.8) .. (1.3,0.1);
\draw  (-1.2,1.6) -- (1.3,0.8);
\draw [line width=1mm, style=dashed,color=red] (-1.3,0.1) .. controls (-1.3,1.5) .. (-1.1,2.4);
\node at (-1.3,0.1) [circle,draw=black,fill=black,scale=0.7]{};
\node at (1.3,0.1) [circle,draw=black,fill=black,scale=0.7]{};
\node at (-1.3,0.8) [circle,draw=black,fill=black,scale=0.7]{};
\node at (-1.2,1.6) [circle,draw=black,fill=black,scale=0.7]{};
\node at (1.3,0.8) [circle,draw=black,fill=black,scale=0.7]{};
\node at (-1.3,-0.3) {$x_i $};
\node at (1.3,-0.3) {$x'_i $};
\node at (-1.6,1.6) {$p_i^2$};
\node at (-1.7,0.9) {$p_i^1$};
\node at (1.7,0.8) {$w'_i $};
\node at (1.7,1.6) {$\ldots$};
\node at (0,-0.8) {(d)};

\draw  (10-6,0) .. controls (12-6,0.3) .. (14-6,0);
\draw  (10.7-6,0.1) .. controls (10.5-6,4.8) and (13.5-6,4.8) .. (13.3-6,0.1);
\draw  (10.7-6,0.8) -- (13.2-6,1.7);
\draw [line width=1mm, style=dashed,color=red] (10.7-6,0.1) -- (10.7-6,0.8);
\draw [line width=1mm, style=dashed,color=red] (10.7-6,0.8) -- (13.2-6,1.7);
\draw [line width=1mm, style=dashed,color=red] (13-6,2.6) .. controls (13.1-6,2.3) .. (13.2-6,1.7);
\node at (10.7-6,0.1) [circle,draw=black,fill=black,scale=0.7]{};
\node at (13.3-6,0.1) [circle,draw=black,fill=black,scale=0.7]{};
\node at (10.7-6,0.8) [circle,draw=black,fill=black,scale=0.7]{};
\node at (13.2-6,1.7) [circle,draw=black,fill=black,scale=0.7]{};
\node at (13.3-6,0.8) [circle,draw=black,fill=black,scale=0.7]{};
\node at (10.7-6,-0.3) {$x_i $};
\node at (13.3-6,-0.3) {$x'_i $};
\node at (13.7-6,1.7) {$p_i^2 $};
\node at (10.3-6,0.9) {$p_i^1$};
\node at (13.6-6,2.5) {$\ldots$};
\node at (13.8-6,0.8) {$w'_i$};
\node at (12-6,-0.8) {(e)};

\draw  (4+6,0) .. controls (6+6,0.3) .. (8+6,0);
\draw  (4.7+6,0.1) .. controls (4.5+6,4.8) and (7.5+6,4.8) .. (7.3+6,0.1);
\draw  (4.8+6,1.6) -- (6.9+6,2.9);
\draw [line width=1mm, style=dashed,color=red] (4.7+6,0.1) .. controls (4.7+6,0.8) .. (4.75+6,1.6);
\draw [line width=1mm, style=dashed,color=red] (4.8+6,1.6) -- (6.9+6,2.9);
\node at (4.7+6,0.1) [circle,draw=black,fill=black,scale=0.7]{};
\node at (7.3+6,0.1) [circle,draw=black,fill=black,scale=0.7]{};
\node at (4.7+6,0.8) [circle,draw=black,fill=black,scale=0.7]{};
\node at (4.8+6,1.6) [circle,draw=black,fill=black,scale=0.7]{};
\node at (6.9+6,2.9) [circle,draw=black,fill=black,scale=0.7]{};
\node at (4.95+6,2.4) [circle,draw=black,fill=black,scale=0.7]{};
\node at (4.5+6,2.4) {$w'_i $};
\node at (4.7+6,-0.3) {$x_i $};
\node at (7.3+6,-0.3) {$x'_i $};
\node at (4.4+6,1.6) {$p_i^2$};
\node at (4.3+6,0.9) {$p_i^1$};
\node at (7.4+6,3) {$p_i^3$};
\node at (7.7+6,1.4) {$\ldots$};
\node at (6+6,-0.8) {(f)};
\end{tikzpicture}
\begin{tikzpicture}[scale=0.8]
\draw  (-2+6,0) .. controls (0+6,0.3) .. (2+6,0);
\draw  (-1.3+6,0.1) .. controls (-1.5+6,4.8) and (1.5+6,4.8) .. (1.3+6,0.1);
\draw  (-1.2+6,1.6) -- (1.25+6,1.6);
\draw [line width=1mm, style=dashed,color=red] (-1.3+6,0.1) .. controls (-1.3+6,1.5) .. (-1.1+6,2.4);
\node at (-1.3+6,0.1) [circle,draw=black,fill=black,scale=0.7]{};
\node at (1.3+6,0.1) [circle,draw=black,fill=black,scale=0.7]{};
\node at (-1.3+6,0.8) [circle,draw=black,fill=black,scale=0.7]{};
\node at (1.25+6,1.6) [circle,draw=black,fill=black,scale=0.7]{};
\node at (-1.2+6,1.6) [circle,draw=black,fill=black,scale=0.7]{};
\node at (-1.3+6,-0.3) {$x_i $};
\node at (1.3+6,-0.3) {$x'_i $};
\node at (-1.6+6,1.6) {$p_i^2$};
\node at (-1.7+6,0.9) {$p_i^1$};
\node at (1.8+6,1.6) {$w'_i $};
\node at (1.6+6,2.5) {$\ldots$};
\node at (1.8+6,0.8) {$\ldots$};
\node at (0+6,-0.8) {(g)};

\draw  (10,0) .. controls (12,0.3) .. (14,0);
\draw  (10.7,0.1) .. controls (10.5,4.8) and (13.5,4.8) .. (13.3,0.1);
\draw  (10.7,0.8) -- (13.2,1.7);
\draw [line width=1mm, style=dashed,color=red] (10.7,0.1) -- (10.7,0.8);
\draw [line width=1mm, style=dashed,color=red] (10.7,0.8) -- (13.2,1.7);
\draw [line width=1mm, style=dashed,color=red] (13,2.6) .. controls (13.1,2.3) .. (13.2,1.7);
\node at (10.7,0.1) [circle,draw=black,fill=black,scale=0.7]{};
\node at (13.3,0.1) [circle,draw=black,fill=black,scale=0.7]{};
\node at (10.7,0.8) [circle,draw=black,fill=black,scale=0.7]{};
\node at (13.2,1.7) [circle,draw=black,fill=black,scale=0.7]{};
\node at (13.3,1.1) [circle,draw=black,fill=black,scale=0.7]{};
\node at (10.7,-0.3) {$x_i $};
\node at (13.3,-0.3) {$x'_i $};
\node at (13.7,1.7) {$p_i^2 $};
\node at (10.3,0.9) {$p_i^1$};
\node at (13.6,2.5) {$\ldots$};
\node at (13.8,1.1) {$w'_i$};
\node at (13.8,0.55) {$\ldots$};
\node at (12,-0.8) {(h)};
\end{tikzpicture}
\end{center}
\caption{Eight configurations in step 3 ($P_i$ is represented by a dashed line).}
\label{out12}
\end{figure}
\begin{mathitem}

\item[{\sl Step 1:}] Colouring the vertices of $C'$ with colours $1$, $2$, $29, 30,\dots, 45$.

For integers $j$, $j'$, let $r_{j,j'}$ be an integer such that $r_{j,j'}\equiv j-j' \pmod{N}$ and $- \lfloor N/2 \rfloor\le r_{j,j'}\le \lfloor N/2 \rfloor$.
Note that $d_G(v_j,v_{j'}) =|r_{j,j'}|$. We begin with a partitioning of $C'$ into five subsets: $C'_1=\{v_j|\ j\equiv 0\pmod{2},\  0\le j< N\}$, $C'_2=\{v_j|\ j\equiv 1\pmod{4},\  0\le j< N\}$, $C'_3=\{v_j|\ j\equiv 3\pmod{12},\  0\le j< N\}$, $C'_4=\{v_j|\ j\equiv 7\pmod{12},\  0\le j< N\}$ and $C'_5=\{v_j|\ j\equiv 11\pmod{12},\  0\le j< N\}$.
Let $m_k$ denote the vertex with the largest index in $C'_k$, for $k\in\{1,2,3,4,5\}$.
We use the following patterns to colour the vertices of $C'$.
\begin{enumerate}
\item if $|C'_1|\equiv 0\pmod{2}$ (or $|C'_1|\equiv 1\pmod{2}$), then we colour all vertices of $C'_1$ (or $C'_1\setminus \{m_1\}$, respectively) with colour $1$;
\item we colour all vertices of $C'_2\setminus \{m_2\}$ with colour $2$;
\item if $|C'_3|\equiv 0\pmod{4}$ (or $|C'_3|\equiv 1\pmod{4}$, respectively), then we use the pattern $29,30,35,36,$ $29,30,35,36,\ldots, 29,30,35,36$ to colour the vertices of $C'_3$ (or $C'_3\setminus \{m_3\}$, respectively);

if $|C'_3|\equiv 2\pmod{4}$ (or $|C'_3|\equiv 3\pmod{4}$), then we use the pattern $29,30,35,$$36,$ $29,30,$$35,36,\ldots,29,30,35,$ $29,30,36$  to colour the vertices of $C'_3$ (or $C'_3\setminus \{ m_3\}$, respectively);
\item if $|C'_4|\equiv 0\pmod{4}$ (or $|C'_4|\equiv 1\pmod{4}$), then we use the pattern $31,32,37,$ $38,$ $31,32,$ $37,38,\ldots,31,32,37,38$ to colour the vertices of $C'_4$ (or $C'_4\setminus \{ m_4\}$, respectively);

if $|C'_4|\equiv 2\pmod{4}$ (or $|C'_4|\equiv 3\pmod{4}$), then we use the pattern $31,32,37,$ $38,31,$ $32,$ $37,38,\ldots,31,$ $32,37,31,32,38$  to colour the vertices of $C'_4$ (or $C'_4\setminus \{m_4\}$, respectively);
\item if $|C'_5|\equiv 0\pmod{4}$ (or $|C'_5|\equiv 1\pmod{4}$), then we use the pattern $33,34,39,$ $40,$ $33,34,$ $39,40,\ldots,33,$ $34,39,40$ to colour the vertices of $C'_5$ (or $C'_5\setminus \{m_5\}$, respectively);

if $|C'_5|\equiv 2\pmod{4}$ (or $|C'_5|\equiv 3\pmod{4}$), then we use the pattern $33,34,39,$ $40,$ $33,34,$ $39,40,\ldots,33,$ $34,39,$ $33,34,40$ to colour the vertices of $C'_5$ (or $C'_5\setminus \{ m_5\}$, respectively);
\item when it is necessary, we use the colours $41$, $42$, $43$, $44$, $45$ to colour the vertices of $\{m_k |\ 1\le k\le 5\}$.
\end{enumerate}
One can check that, for any pair of vertices $u,v$ of $C'$ with the same colour $k$, $d_G(u,v)>k$.
For example, in the pattern $29,30,35,36$ of length four, two vertices with the same colour are at distance at least $48$, since the pattern has length four and we colour vertices with the same remainder modulo $12$. The same goes for the pattern $29,30,35,29,30,36$ of length six. Note that, for every pair of vertices $(u_i^a,u_i^b)$, at least one of them is coloured with $1$.

\item[{\sl Step 2:}] Colouring the vertices of $F_i- P_i$ with colours $1$, $2$ and $3$, for every $i=0,\ldots, s-1$.

Let $x'_i$ be the vertex among $u_i^a$ and $u_i^b$ different from $x_i$. Let  $l_i$ be the order of $D_i^1$ and let $x'_i, x_i^1, \ldots, x_{i}^{ l_{i}-1}$ be the vertices of $D_i^1$  in an ordering starting from $x'_i$.
If $x'_i$ is coloured with colour $1$, then we use the pattern $3,1,2,1$ to colour the vertices $x_i^1, \ldots, x_{i}^{l_{i}-1}$. If $x'_i$ is not coloured with colour $1$, then we use the pattern $1,3,1,2$ to colour the vertices $x_i^1, \ldots, x_{i}^{l_i-1}$. 
Analogously as in the proof of Lemma \ref{lemma}, we colour vertices of $D_i^j$ ($j=2,3,\dots, k_i$) using the pattern $1,2,1,3$ starting from the vertex of $D_i^j$ at shortest distance from $C'$.
At this step we do not change the colouring in order to avoid clashing vertices of colour $3$.

\item[{\sl Step 3:}] Recolouring some vertices in $F_i$, for every $i=0,\ldots, s-1$.

In this step we deal with possible collisions in colour $2$ between vertices of $F$ and vertices of $D_i^1$ at distance $2$ from $F$. We also change colours of neighbours of $p_i^2$ coloured with $1$ since, in Step $5$, we will colour $p_i^2$ with $1$ for reducing the number of colours used for the whole graph $G$. 

Since we used the patterns $1,3,1,2$ and $3,1,2,1$ to colour the vertices of $V(D_i^1)\setminus \{x_i\}$, no vertex at distance $2$ from $x'_i$ has colour $2$. For any $i=0,\dots, s-1$, let $w_i$ denote the vertex of $F_i-P_i$ at distance $2$ from $x_i$ and let $w'_i$ be the possible neighbour of $p_i^2$ in $F_i-P_i$.


\begin{enumerate}
\item[Case i)]{\sl $w_i$ has colour $2$.} First suppose that $p_i^2$ has no neighbour of colour $1$ (see Fig. \ref{out12}(a)) or $p_i^2$ has a neighbour $w'_i$ with colour $1$ in $D_i^1$ (see Fig. \ref{out12}(b)). In both possibilities we recolour the vertices $x_i^j=w_i, x_i^{j+1}, x_i^{l_i-1}$ of $D_i^1$ with $\underline{4},2,1,3,1,2,1\ldots$ instead of $\underline{2},1,3,1,2,\ldots$. Note that the underlined colours belong to the vertex $w_i$.

Now we assume that $p_i^2$ has a neighbour $w'_i$ of colour $1$ which does not belong to $D_i^1$. Thus $p_i^2 p_i^3$ is a chord (see Fig. \ref{out12}(c)). We recolour the vertices $x_i^j=w_i, x_i^{j+1}, \dots, x_i^{l_i-1}$ of $D_i^1$  with $\underline{4},1,2,1,3,\ldots$ instead of $\underline{2},1,3,1,2,\ldots$ (the underlined colours belong to $w_i$), and the vertices of $D_i^2$ with pattern $\underline{2},1,3,1,\ldots$ instead of $\underline{1},2,1,3,\ldots$ (the underlined colours belong to $w'_i$). Note that if $w'_i$ had colour $1$, the colour $1$ was changed.

\item[Case ii)] {\sl $w_i$ does not have colour $2$}. If $p_i^2$ has no neighbour $w'_i$ of colour $1$, then we do not modify the colouring of $D_i^1$ in this step. Suppose that $p_i^2$ has a neighbour $w'_i$ of colour $1$. Suppose that $w'_i$ is a neighbour of $x'_i$ (see Fig. \ref{out12}(d,e)). Clearly $x'_i$ does not have colour $1$ and $x_i$ has colour $1$ (by Step 1). Then we modify the path $P_i$ by replacing vertex $x_i$ with $x'_i$ and $p_i^1$ with $w'_i$ and recolour the modified path $D^i_1$ with pattern $3,1,2,1,3,\dots $.

Now suppose that $w'_i$ is at distance at least $2$ from $x'_i$ and that $p_i^2 w'_i$ and $p_i^1 p_i^2$ are not chords (see Fig. \ref{out12}(f)). We recolour vertices of $D_2^i$ with pattern $\underline{2},1,3,1,\ldots$ instead of $\underline{1},2,1,3,\ldots$. If $w'_i$ is at distance at least $2$ from $x'_i$ and $p_i^2 w'_i$ is a chord (see Fig. \ref{out12}(g)), we recolour vertices $x_i^j=w'_i,x_i^{j+1}, \dots x_i^{l_i-1}$ of $D_1^i$ with $\underline{4},1,2,1,3,\ldots$ instead of $\underline{1},2,1,3,\ldots$ or $\underline{1},3,1,2,\ldots$.
Finally, if $w'_i$ is at distance at least $2$ from $x'_i$ and $p_i^1 p_i^2$ is a chord (see Fig. \ref{out12}(h)), then we change the colour of $w'_i$ to $4$. Note again that, in each possibility, the underlined colours belong to $w'_i$.
\end{enumerate}

\item[{\sl Step 4:}] Avoid collisions in colouring of vertices of $F_i-P_i$ for $i=0,\dots,s-1 $. 

Now we check and modify (analogously as in the proof of Lemma \ref{lemma}) the defined colouring of $F_i-P_i$ to avoid collisions between pairs of vertice with the same colour. Obviously, there is no collision between vertices coloured with colour $1$ or $2$. Hence the only possible collision is in colour $3$. Let $a$ and $b$ be a pair of clashing vertices in colour $3$. If $a$ and $b$ belong to the same component $D_i^k$ of $F_i - P_i$, $k,\in\{1,2,\ldots,k_i\}$, then we proceed as in Case 2 of the proof of Lemma \ref{lemma}.
Thus we may assume that $a$ and $b$ belong to different components $D_i^k$ and $D_i^{k'}$ of $F_i - P_i$, $k,k'\in\{1,2,\ldots,k_i\}$. If we changed the colouring of $D_i^2$ in Step 3, then we recolour the vertices of $D_i^2$ starting from $w'_i$ (also defined in Step 3) with pattern $2,3,1,2,1,3,1,\dots $ instead of $2,1,3,1,2,1,\dots$. Then we proceed as in Case 1 of the proof of Lemma \ref{lemma}. 

\medskip
 
Now we have to check that the vertices coloured with colour $4$ in Step 3 are pairwise at distance at least $5$, and that the vertices coloured with colour $4$ in Step 3 are pairwise at distance at least $5$ from the added vertices of colour $4$ in Step $4$ of applying Lemma \ref{lemma}.

By definition, $w_i$ is at distance at most $2$ from both $x_i$ and $x'_i$. Moreover, since $w'_i$ is the neighbour of $p_i^2$ and since we did not recolour $w'_i$ with colour $4$ in the configurations described in Fig. \ref{out12}(d,e) of Step 3, $w'_i$ is either at distance at least $2$ from both $x_i$ and $x'_i$ or does not have colour $4$.
Thus, the vertices recoloured with colour $4$ in Step 3 are at mutual distance at least $5$.

Let $a$ be a vertex of colour $4$ from Step 3 (one of $w_i$, $w'_i$ denoted in Step 3). Suppose that $b$ is a vertex of colour $4$ not in $D_i^1$. The minimal distance between any vertex of $D_i^1$ and any vertex of $D_i^2$ is at least $3$. Moreover, because we have proceeded as in the proof of Lemma \ref{lemma}, $b$ is at distance at least $2$ from a vertex at minimal distance from $a$. Hence, $d(a,b)\ge 5$. 

Now suppose $b$ is a vertex of colour $4$ in $D_i^1$. Since, in every case, $a$ is at distance at least $3$ from another vertex of colour $3$ in $D_i^1$, we obtain that $d(a,b)\ge 5$.

\item[{\sl Step 5:}] Colouring the vertices of $P_i\setminus\{x_i\}$ with colours 5 to 28 and 46 to 51, for every $i=0,\ldots, s-1$.

We start with colouring of the vertices $p^2_i$ by $1$ for each $i=0,\dots s-1$. Since we have changed the colours of the eventual neighbours of $p^2_i$ of colour $1$ in Step 3, there are no possible collisions.

For the vertices of $P_i$, we use Pattern~\eqref{pat5-15} beginning at the fourth vertex of $P_i$, i.e., the vertex $p^3_i$. By the proof of Lemma \ref{out2}, we know that such a colouring satisfies the distance constraints of a packing colouring.
 
Let $B=\{p^1_{i}|\ 0\le i< s \} $. We colour the vertices of $B$ with colours $16$ to $28$ and (if necessary) $46$ to $51$.
For integers $j$, $j'$, let $r_{j,j'}$ be an integer such that $r_{j,j'}\equiv j-j' \pmod{s}$ and $- \lfloor s/2 \rfloor\le r_{j,j'}\le \lfloor s /2 \rfloor$.
Note that the vertices $p^1_j$ and $p^1_{j'}$ are at distance $2 \vert r_{j,j'}\vert +1$.
We begin by a partitioning of $B$ into three subsets $B_1$, $B_2$ and $B_3$, with $B_k=\{p_i^1|\  i\equiv k-1\pmod{3},\  0\le i< s \}$, $k=1,2,3$.
Let $m_k$ ($m'_k$) denote the vertex with the largest (second largest, respectively) index in $B_k$, for $k\in\{1,2,3\}$.
We use the following patterns to colour the vertices of $B$.
\begin{enumerate}
\item For vertices of $B_1$, we use the pattern $16, 17, 18, 16, 17, 18, \ldots, 16, 17, 18$. If $|B_1|\equiv 1\pmod{3}$ (or $|B_1|\equiv 2\pmod{3}$), then we erase colours of $m_1$ (or of $m_1$ and $m'_1$, respectively).
\item For vertices of $B_2$, we use the pattern $$19, 20, 21, 25, 26, 19, 20, 21, 25, 26, \ldots, 19, 20, 21, 25, 26$$ when $\vert B_2\vert \equiv 0,1,2 \pmod{5}$, or the pattern $$19, 20, 21, 25, 26, 19, 20, 21, 25, 26, \ldots, 19, 20, 21, 25, 26, 19, 20, 21, 25, 19, 20, 21, 26$$ when $\vert B_2\vert \equiv 3,4 \pmod{5}$. Then, for $\vert B_2\vert \equiv 1,4 \pmod{5}$ we erase colour of $m_2$, and for $\vert B_2\vert \equiv 2 \pmod{5}$ we erase colours of $m_2$ and $m'_2$.

\item For vertices of $B_3$, we use the pattern $$22,23,24,27,28,22,23,24,27,28,\ldots,22,23,24,27,28$$ when $\vert B_3\vert \equiv 0,1,2 \pmod{5}$, or the pattern $$27,28,22,23,24,27,28,\ldots,22,23,24,27,28,22,23,24,27,22,23,24,28$$ when $\vert B_3\vert \equiv 3,4 \pmod{5}$. Then, for $\vert B_3\vert \equiv 1,4 \pmod{5}$ we erase colour of $m_3$, and for $\vert B_3\vert \equiv 2 \pmod{5}$ we erase colours of $m_3$ and $m'_3$.

\item When it is necessary, we use the colours $46$, $47$, $48$, $49$, $50$, $51$ to colour the vertices of $\{m_k,m'_k |\ 1\le k\le 3\}$.
\end{enumerate}
\end{mathitem}

For checking that the defined colouring satisfies the distance constraints of a packing colouring, we recall that any two consecutive vertices in each $B_k$ ($k=1,2,3$) are pairwise at distance at least $7$, implying that vertices having the same colour are pairwise at distance at least $19$ in $B_1$ and at distance at least $31$ in $B_2$ and in $B_3$ (except some vertices of colours from $19$ to $24$ that can be at distance $25$ apart).
\end{proof}

Note that in some cases (depending on the size of $B$ and $C'$) we can decrease the upper bound $51$ of Theorem \ref{thm star structure}. For example, if the internal face $C'$ has length $4k$ ($k\in\mathbb{N}$) and if the number of $2$-connected components of $G-C'$ is $15r$, $r\in \mathbb{N}$, then we can use only $40$ colours instead of $51$.

The following statement will be used in the proof of Theorem \ref{thm path structure}.

\begin{proposition} \label{pathgap}
There is a packing colouring of even vertices of $P_{\infty}$ with colours $\{k,k+1, \dots, 2k-1\}$.
\end{proposition}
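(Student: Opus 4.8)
The plan is to exhibit an explicit periodic colouring and verify the distance constraint directly. Recall that $V(P_\infty)=\mathbb{Z}$, so the even vertices are exactly those at positions $\{2a : a\in\mathbb{Z}\}$, and the distance in $P_\infty$ between $2a$ and $2b$ equals $2|a-b|$. I would reindex the even vertices by $a\in\mathbb{Z}$ via $2a\mapsto a$, and then colour the even vertex $2a$ with colour $k+(a\bmod k)$, where $a\bmod k\in\{0,1,\dots,k-1\}$. In other words, I repeat the block $k,k+1,\dots,2k-1$ with period $k$ along the even vertices. By construction this uses exactly the $k$ prescribed colours $\{k,k+1,\dots,2k-1\}$.

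The verification is the heart of the argument, and it is short. Suppose two distinct even vertices $2a$ and $2b$ receive the same colour $i$. Then $a\equiv b\pmod{k}$ by the definition of the colouring, so $|a-b|\ge k$, and hence their distance in $P_\infty$ is $2|a-b|\ge 2k$. Since every colour used satisfies $i\le 2k-1<2k$, we obtain $d_{P_\infty}(2a,2b)=2|a-b|\ge 2k>i$. Thus the distance between any two equally coloured even vertices strictly exceeds the colour, which is precisely the packing constraint, and $c$ is a packing colouring of the even vertices with colours $\{k,\dots,2k-1\}$.

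There is essentially no obstacle here; the only point deserving attention is the factor $2$ coming from the spacing of the even vertices, which is exactly what allows a pattern of period $k$ to carry colours as large as $2k-1$. Indeed the bound is tight: two even vertices at index-distance exactly $k$ lie at graph-distance $2k$, so no colour of value $2k$ or larger could be placed at this period, confirming that $\{k,\dots,2k-1\}$ is the natural colour set for a period-$k$ pattern on the even vertices.
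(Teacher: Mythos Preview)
Your proof is correct and takes essentially the same approach as the paper: the paper colours $P_\infty$ with the period-$2k$ pattern $1,k,1,k+1,\dots,1,2k-1$ and then discards the colour~$1$, which amounts precisely to your period-$k$ pattern $k,k+1,\dots,2k-1$ on the even vertices. Your verification that two equally coloured even vertices lie at distance at least $2k>2k-1\ge i$ is exactly the observation the paper makes in one line.
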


\begin{proof}
For the colouring of the vertices of the path $P_{\infty}$ we use pattern $1,k,1,k+1,1,k+2, \dots, 1,2k-1$ and after deleting colour $1$ we get the required colouring. Note that the distance between any pair of vertices coloured with the same colour in two consecutive copies of this pattern is $2k$.
\end{proof}

The last class of outerplanar graphs we consider in this paper is a class of not necessarily $2$-connected graphs.

\begin{theorem} \label{thm path structure}
Let $G$ be a connected subcubic outerplanar graph with no internal face such that the block graph of $G$ is a path. Then $\chi_{\rho}(G)\leq 305$.
\end{theorem}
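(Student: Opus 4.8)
The plan is to exploit the linear structure granted by the hypothesis that the block graph $\mathcal{B}_G$ is a path. First I would order the blocks $B_1,\dots,B_m$ along this path and record the structural consequence of subcubicity: two nontrivial $2$-connected blocks can never share a cut vertex, since such a vertex would have degree at least $2+2=4$. Hence consecutive nontrivial blocks are always joined by bridges, every cut vertex attaching a bridge to a nontrivial block has degree exactly $3$ (degree $2$ inside its block, degree $1$ along the bridge), and each block carries at most two such attachment vertices. Thus $G$ is a chain of $2$-connected outerplanar graphs with no internal face, linked by paths of bridges. This is the same local picture as the blocks $F_i$ in Theorem \ref{thm star structure}, except that here the blocks are strung along a path rather than hung around a single internal face.

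Next I would colour each nontrivial block essentially as in Theorem \ref{out1}: choose in $B_i$ a shortest path between two degree-$2$ vertices lying in its two end faces, colour the remaining vertices of $B_i$ with $\{1,2,3,4\}$ by Lemma \ref{lemma}, and colour the chosen path with the length-$36$ pattern \eqref{pat5-15}. The colours $\{1,2,3,4\}$ can be reused across all blocks, since off-spine vertices of different blocks are separated by cut vertices and hence far apart, up to the local repairs near the junctions (colour $4$, exactly as in Lemma \ref{lemma}). The bridges together with a shortest transversal through each block form a central chain $R$; because cut vertices separate $G$, the chain $R$ is isometric and can carry an infinite-path pattern. For the high colours used on the various block spines I would partition the blocks along the chain into a bounded number of classes, as with the classes $B_1,B_2,B_3$ of the vertices $p_i^1$ in Theorem \ref{thm star structure}, give the classes mutually disjoint palettes, and use Proposition \ref{pathgap} to space out the attachment vertices, so that blocks far apart along the chain may safely reuse the same high colours while neighbouring blocks do not.

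The main obstacle is precisely this interaction of blocks at the cut vertices, and it is what forces a large constant instead of the $15$ of Theorem \ref{out1}. One cannot simply route a single isometric spine through all of $G$ and hang thin paths off it: because each cut vertex sits in the \emph{interior} of a $2$-dimensional block, routing the spine from one attachment vertex directly to the other would leave an entire chain-of-faces (ladder-like) region off the spine, and such a region cannot in general be packing-coloured with only the four colours $\{1,2,3,4\}$. Hence each block must be coloured by its own end-face-to-end-face scheme, and the real difficulty is to reconcile these independent colourings across the bridges. Since the hypothesis only forbids branching of the block graph, arbitrarily many small blocks (for instance triangles) joined by short bridges may crowd a stretch of the chain of length less than $15$, so that high-coloured spine vertices of different blocks become close in $G$; controlling this with a palette whose size is independent of the number of blocks—by a periodic assignment of disjoint high-colour palettes along the path, in the spirit of Lemma \ref{disjoint cycles} and Step $5$ of Theorem \ref{thm star structure}—is the crux. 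The bookkeeping of how many such palettes are needed, together with the colours spent on the chain $R$ and on the junction repairs, is what yields the bound $305$. I would conclude by verifying the distance constraints only at the junctions, since away from them the check is identical to those in Lemma \ref{lemma} and Theorem \ref{out1}.
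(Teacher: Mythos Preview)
Your proposal diverges from the paper's proof and, more importantly, explicitly rejects the very approach the paper uses. The paper \emph{does} route a single isometric spine $P$ through all of $G$---a shortest path from a degree-$2$ vertex in an end face of $B_1$ to a degree-$2$ vertex in an end face of $B_k$---and \emph{does} hang paths off it. The point you miss is that these hanging paths $Q_{2i-1},Q_{2i}$ are not short stubs: they run from $P$ all the way to the two end faces of each block $B_i$. Once you remove $P_i\cup Q_{2i-1}\cup Q_{2i}$ from $B_i$, what remains is exactly the situation of Lemma~\ref{lemma}: a union of induced path components, colourable with $\{1,2,3,4\}$. So your objection that ``routing the spine from one attachment vertex to the other would leave a ladder-like region off the spine'' is correct for the bare spine, but is neutralised by the $Q_j$; you should not have dismissed this architecture.

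The paper's bookkeeping then falls out rather directly: colours $5$--$15$ on the $Q_j$ via Lemma~\ref{out2} (starting at carefully staggered distances $2$ and $3$ from $P$ so that paths from different blocks never collide), colours $16$--$50$ on the single spine $P$ via Proposition~\ref{preliminaries}(ii), and finally the vertices of the $Q_j$ at distance $1$ and $2$ from $P$, which sit along $P$ like a sparse subsequence, receive $51$--$152$ and $153$--$305$ via Proposition~\ref{preliminaries}(iii) and Proposition~\ref{pathgap}. This is how $305$ arises.

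Your alternative---independent end-face-to-end-face spines in every block, reconciled by disjoint palettes along a central chain $R$---is not obviously wrong, but it is significantly more tangled, and the sketch has a real gap: you never say how the per-block spines interact with your chain $R$. The transversal through a block (attachment-to-attachment, forming part of $R$) and the block spine (end-face-to-end-face) are in general different paths, so many vertices of each block lie on one, the other, both, or neither, and you have not said which colouring governs which. Nor do you bound the number of disjoint palettes needed: you gesture at ``a periodic assignment along the path,'' but the period depends on how tightly small blocks can be packed, and you would need to argue that blocks at chain-distance at most $15$ are bounded in number before you can claim a finite palette count, let alone $305$.
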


\begin{proof}

Let $G$ be a sucubic outerplanar graph, $B_G$ the block graph of $G$ and let $B_1, \dots, \, B_k$ denote the blocks of $G$ such that $B_i, B_{i+1}$ are consecutive in $B_G$, $i=1,\dots, k-1$ (i.e., since $\Delta(G)\le 3$, they are connected by a path which intersects no other block of $G$). Let $C_i$ denote the boundary cycle of $B_i$, $i=1,\dots, k$. Since $G$ contains no internal face, each $B_i$ contains no internal face as well, implying that every $B_i$ which is not a cycle contains exactly two end faces. Let $x_1$ denote any vertex of degree two (in $B_1$) in one end face of $B_1$, $x_k$ any vertex of degree two (in $B_k$) in one end face of $B_k$ and let $P$ denote a shortest $x_1,x_k$-path in $G$. Among all possible choices of the vertices $x_1,x_k$ we choose $x_1'$ and $x_k'$ such that the path $P$ is shortest possible. Let $P_i=P\cap B_i$, $i=1,\dots, k$. Obviously $P$ goes through all the blocks of $G$, hence $P_i$ is nonempty for each $i=1,\dots, k$, every $P_i$ is a path since the block graph of $G$ is a path, and each $P_i$ is shortest in $G$ since $P$ is shortest in $G$. In an orientation of $P$ from $x_1$ to $x_k$, we denote by $z_i$ the first vertex of $P$ in $B_i$ and by $z_i'$ the neighbour of $z_i$ in $B_i$ which does not belong to $P$. Note that there must be exactly one such vertex $z_i'$ in each $B_i$ since $P$ is shortest possible and, clearly, each $z_i$ must have degree three. For every block which is not a cycle, we further denote by $x_i$ any vertex of degree two (in $B_i$) in one end face of $B_i$, $i=2,\dots, k-1$ and by $y_i$ any vertex of degree two (in $B_i$) in the end face of $B_i$ which does not contain vertex $x_i$, $i=1,\dots, k$. 
\begin{figure}[ht]
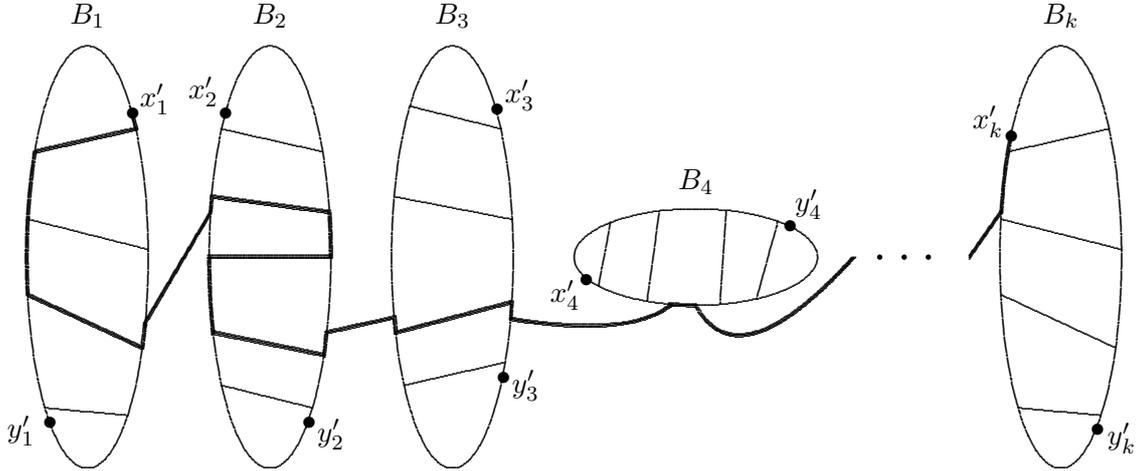

\beginpicture
\setcoordinatesystem units <0.8mm,1mm>
\setplotarea x from -10 to 180, y from 20 to 90
\setlinear
\ellipticalarc axes ratio 2:7 360 degrees from 20 50 center at 10 50
\put{$B_1$} at 10 82
\plot 1.5 64  18 67 /
\plot 0.3 55  20 51 /
\plot 0.2 45  18.9 38 /
\plot 3 30  16.6 29 /
\put{$x'_1$} at 21 71
\put{$y'_1$} at -1 27
\put{$\bullet$} at 17.4 69
\put{$\bullet$} at 3.8 28

\ellipticalarc axes ratio 2:7 360 degrees from 50 50 center at 40 50
\put{$B_2$} at 40 82
\plot 32 67  48.5 64 /
\plot 30.5 58  49.5 56 /
\plot 30 50  50 50 / 
\plot 30.7 40  48.9 37 /
\plot 32 33  47 30 /
\put{$x'_2$} at 29 72
\put{$y'_2$} at 50 26.5
\put{$\bullet$} at 32.7 69
\put{$\bullet$} at 46.4 28

\ellipticalarc axes ratio 2:7 360 degrees from 80 50 center at 70 50
\put{$B_3$} at 70 82
\plot 63 70  78 67 /
\plot 60.5 58  79.8 55 /
\plot 60.7 40  79.6 44 /
\plot 62.1 33  78.5 36 /  
\put{$x'_3$} at 81 71.5
\put{$y'_3$} at 82 33
\put{$\bullet$} at 77.3 69.5
\put{$\bullet$} at 78.4 34

\ellipticalarc axes ratio 5:2 360 degrees from 90 50 center at 110 50
\put{$B_4$} at 110 60
\plot 94 46.1  96 54.6 /
\plot 102 44.1  104 56 /
\plot 114 43.8  115 56.1 /
\plot 120 44.5  123.5  54.7 /
\put{$x'_4$} at 88.5 45
\put{$y'_4$} at 128.5 57
\put{$\bullet$} at 92 46.9
\put{$\bullet$} at 125.5 54

\huge
\put{$\dots$} at 145 50
\normalsize

\ellipticalarc axes ratio 2:7 360 degrees from 180 50 center at 170 50
\put{$B_k$} at 170 82
\plot 161.5 64  178 67 /
\plot 160.3 55  180 51 /
\plot 160.2 45  178.9 38 /
\plot 163 30  176.6 29 /
\put{$x'_k$} at 158 68
\put{$y'_k$} at 180 26
\put{$\bullet$} at 161.9 66
\put{$\bullet$} at 176 27

\plot 18 67.2   1.5 64.2  1.5 63.8  18.2 66.85  18.2 67  17.6 69  17.2 69  17.8 67 /
\ellipticalarc axes ratio 2:7 40 degrees from 1.17 64 center at 10 50
\ellipticalarc axes ratio 2:7 41 degrees from 1.6 64 center at 10 50
\plot 0.2 45.2  18.9 38.2  18.9 37.8  0.2 44.8 / 
\plot 18.8 38  19.25 41  19.65 41  19.2 37.8 /
\plot 19.35  40.75  30.3 55.75  30.3 56.25  19.25 41.25  19.25  41  30.3 56 /
\plot 30.05 56  30.25 58.2  49.5 56.2  49.5 55.8  30.65  57.8  30.45 56 / 
\ellipticalarc axes ratio 2:7 12.5 degrees from 49.8 50 center at 40 50
\ellipticalarc axes ratio 2:7 12.8 degrees from 50.2 49.8 center at 40 50
\plot 30 50.2  50 50.2  50 49.8  30 49.8 /  
\ellipticalarc axes ratio 2:7 21.7 degrees from 29.8 50.2 center at 40 50
\ellipticalarc axes ratio 2:7 21.8 degrees from 30.2 50 center at 40 50
\plot 30.5 39.8  48.9 36.8  48.9 37.2  30.7 40.2  30.7 40 /
\plot 48.7 37  49.1 40  49.5 40  49.1 36.8 /
\plot 49.3 39.8  60.25 41.8  60.5 39.75  79.6 43.8  79.6 44.2  60.9 40.2  60.65 42.2  49.3 40.2  49.3 40  60.45 42 /
\plot 79.6 44  79.4 41.8  79.8 42  80 44.2 /
\setquadratic
\plot 79.6 41.8  96 41  106 43.7 /
\plot 79.6 41.96  96 41.17  105.7 43.75 /
\plot 79.4 41.64  96 40.83  106.3 43.65 /
\setlinear
\plot 106 43.52  110 43.42  110 43.78  105.6 43.88  105.8 43.7  110 43.6 /
\setquadratic
\plot 110 43.6  120 40  136 50  /
\plot 109.75 43.6  120 39.82  136.25 50  /
\plot 110.25 43.6  120 40.18  135.75 50  /
\setlinear
\plot 155 49.75  160.2 55.75  160.2 56.25  155 50.25  155 50  160.2 56 /
\ellipticalarc axes ratio 2:7 -24 degrees from 160.05 56 center at 170 50
\ellipticalarc axes ratio 2:7 -24 degrees from 160.4 55.8 center at 170 50

\endpicture
\caption{Structure of blocks of the graph $G$ in Theorem \ref{thm path structure}.}
\label{fig path structure}
\end{figure}

\medskip

Among all possible choices of the vertices $x_i,y_i$ ($i=1,\dots, k$) we choose $x_i', y_i'$ in such a way that $x_i',P$-path and $y_i',P$-path, respectively, is shortest possible. For each $i=1,\dots, \, k$, let $Q_{2i-1}$ be a shortest $y'_i,P$-path in $G$, and let $Q_{2i}$ be a shortest $x'_i,P$-path in $G$. Note that $Q_2$ and $Q_{2k}$ are empty and that some other paths $Q_{i}$ may be trivial or empty (e.g., in the case when $B_i$ is a cycle). The structure of the graph $G$ is depicted in Fig. \ref{fig path structure} (the thick path represents the path $P$) and the paths $Q_{2i-1}, Q_{2i}$ are illustrated in Fig.~\ref{figadd44}.

Consider each block $B_i$ separately. Note that, for each $i=1,\dots, \,k$, the graph $G_i=G[B_i-(V(P_i)\cup V(Q_{2i-1})\cup V(Q_{2i}))]$ consists of path components. Thus, analogously as in the proof of Lemma \ref{lemma}, we can colour vertices of $G_i$ with colours $1,2,3,4$ using the periodic pattern $1,2,1,3$ and modifications introduced in the proof of Lemma \ref{lemma}, starting at the end face of $B_i$ containing vertex $x_i$. Note that, if $B_i$ is a cycle, $\chi_{\rho}(B_i)\leq 4$. Moreover, we can colour the vertices of $G_i$ in such a way that the vertex $z_i'$ gets colour $1$ (if not so, then we can interchange roles of $x_i$ and $y_i$ for colouring of $V(G_i)$, i.e., we start such a colouring from $y_i$ instead of $x_i$).

Now we check that the defined colouring meets the conditions of a packing colouring. First, there is no collision in colours $1$ and $2$, since $\mbox{dist}_G(a,b)\geq 3$ for any $a\in V(G_i)$ and $b\in V(G_j), \, i<j$.
From the modifications described in the proof of Lemma \ref{lemma}, it follows that no end vertex of any path component of any $G_i$ is coloured with colour $4$ since colour $4$ was used for a vertex between vertices of a critical pair belonging to one path component.
This implies that the distance between two vertices coloured with $4$ which belong to different block $B_i,B_j$ is at least $5$, hence there is no collision in colour $4$.
For colour $3$, since no $z_i'$ is coloured with colour $3$, there is no collision in colour $3$ as well.
And since there is no edge in $G$ connecting the path components of the blocks $B_i$, the defined colouring meets the distance constraints of a packing colouring of $\bigcup\limits_{i=1}^k \left[ B_i-(V(P_i)\cup V(Q_{2i-1})\cup V(Q_{2i}))\right]$.

\begin{figure}
\begin{center}
\begin{tikzpicture}[scale=1]
    \draw (0,0) ellipse (6cm and 1cm);
    \draw (0,3) ellipse (6cm and 1cm);
   \draw[ultra thick,color=red] (-1,1)-- (-1,2);
   \draw[ultra thick,color=red] (-1,2)-- (0.5,2);
   \draw[ultra thick,color=red] (1,4)-- (0.5,2);
   \draw[ultra thick,color=red] (1,4)-- (1.5,3.95);
   \draw[ultra thick,color=red] (2,3.9)-- (1.5,3.95);

   \draw[ultra thick,color=red] (5,-0.55)-- (5.5,-1.1);
   \draw[ultra thick,color=red] (2,3.9)-- (2.5,4.4);

   \draw[ultra thick,color=red] (-1,1)-- (-0.5,1);
   \draw[ultra thick,color=red] (-0.5,1)-- (0,-1);
   \draw[ultra thick,color=red] (1.5,-1)-- (0,-1);
   \draw[ultra thick,color=red] (1.5,-1)-- (2,-0.95);
   \draw[ultra thick,color=red] (2.5,-0.9)-- (2,-0.95);
   \draw[ultra thick,color=red] (2.5,-0.9)-- (3,-0.85);
   \draw[ultra thick,color=red] (3.5,-0.8)-- (3,-0.85);
   \draw[ultra thick,color=red] (3.5,-0.8)-- (4,-0.75);
   \draw[ultra thick,color=red] (4.5,-0.65)-- (4,-0.75);
   \draw[ultra thick,color=red] (4.5,-0.65)-- (5,-0.55);
   \draw[ultra thick,color=blue,dashed] (5.5,-0.4)-- (5,-0.55);
   \draw[ultra thick,color=blue,dashed] (5.5,-0.4)-- (6,0);

   \draw[ultra thick,color=blue,dashed] (5.5,3.4)-- (5,3.55);
   \draw[ultra thick,color=blue,dashed] (5,3.55)-- (4.5,3.65);
   \draw[ultra thick,color=blue,dashed] (4.5,3.65)-- (4,3.75);
   \draw[ultra thick,color=blue,dashed] (3.5,3.8)-- (4,3.75);
   \draw[ultra thick,color=blue,dashed] (3.5,3.8)-- (3,3.85);
   \draw[ultra thick,color=blue,dashed] (2.5,3.9)-- (3,3.85);
   \draw[ultra thick,color=blue,dashed] (2.5,3.9)-- (2,3.95);

   \draw[ultra thick,color=blue,dashed] (-1.5,2)-- (-1,2);
   \draw[ultra thick,color=blue,dashed] (-1.5,2)-- (-2,2.05);
   \draw[ultra thick,color=blue,dashed] (-2.5,2.1)-- (-2,2.05);
   \draw[ultra thick,color=blue,dashed] (-2.5,2.1)--  (-3,3.85);
   \draw[ultra thick,color=blue,dashed] (-3.5,3.8)--  (-3,3.85);
   \draw[ultra thick,color=blue,dashed] (-3.5,3.8)--  (-4,3.75);
   \draw[ultra thick,color=blue,dashed] (-4.5,3.65)--  (-4,3.75);
   \draw[ultra thick,color=blue,dashed] (-4.5,3.65)--  (-5,3.55);

   \draw[ultra thick,color=blue,dashed] (-1.5,1)-- (-1,1);
   \draw[ultra thick,color=blue,dashed] (-1.5,1)-- (-2,0.95);
   \draw[ultra thick,color=blue,dashed] (-2.5,0.9)-- (-2,0.95);
   \draw[ultra thick,color=blue,dashed] (-2.5,0.9)--  (-3,0.85);
   \draw[ultra thick,color=blue,dashed] (-3.5,0.8)--  (-3,0.85);
   \draw[ultra thick,color=blue,dashed] (-3.5,0.8)--  (-4,0.75);
   \draw[ultra thick,color=blue,dashed] (-4.5,0.65)--  (-4,0.75);
   \draw[ultra thick,color=blue,dashed] (-4.5,0.65)--  (-5,0.55);
   \draw[ultra thick,color=blue,dashed] (-5.5,0.4)--  (-5,0.55);

 \draw (-0.5,2)-- (-1,4);
 \draw (0.5,2)-- (1,4);
 \draw (-2,2.05)-- (-1.5,4);
 \draw (-2.5,2.1)-- (-3,3.85);
 \draw (-3.5,2.2)-- (-4,3.75);
 \draw (-5.5,2.6)-- (-4.5,3.65);
 \draw (2,2.1)-- (1.5,4);
 \draw (2.5,2.1)-- (2.5,3.9);
 \draw (3,2.15)-- (3.5,3.8);
 \draw (4,2.25)-- (4,3.75);
 \draw (5,2.45)-- (4.5,3.65);
 \draw (5.5,2.6)-- (5,3.55);

 \draw (0.5,2-3)-- (1,4-3);
 \draw (0,2-3)-- (-0.5,4-3);
 \draw (2,2.05-3)-- (1.5,4-3);
 \draw (2.5,2.1-3)-- (3,3.85-3);
 \draw (3.5,2.2-3)-- (4,3.75-3);
 \draw (5.5,2.6-3)-- (4.5,3.65-3);
 \draw (-2,2.1-3)-- (-1.5,4-3);
 \draw (-2.5,2.1-3)-- (-2.5,3.9-3);
 \draw (-3,2.15-3)-- (-3.5,3.8-3);
 \draw (-4,2.25-3)-- (-4,3.75-3);
 \draw (-5,2.45-3)-- (-4.5,3.65-3);
 \draw (-5.5,2.6-3)-- (-5,3.55-3);

\node at (0.5,2) [circle,draw=black,fill=black,scale=0.5]{};
\node at (0,2) [circle,draw=black,fill=black,scale=0.5]{};
\node at (-0.5,2) [circle,draw=black,fill=black,scale=0.5]{};
\node at (-1,2) [circle,draw=black,fill=black,scale=0.5]{};
\node at (-1.5,2) [circle,draw=black,fill=black,scale=0.5]{};
\node at (-2,2.05) [circle,draw=black,fill=black,scale=0.5]{};
\node at (-2.5,2.1) [regular polygon,regular polygon sides=3,draw=black,fill=green,scale=0.4]{};
\node at (-3,2.15) [circle,draw=black,fill=black,scale=0.5]{};
\node at (-3.5,2.2) [circle,draw=black,fill=black,scale=0.5]{};
\node at (-4,2.25) [circle,draw=black,fill=black,scale=0.5]{};
\node at (-4.5,2.35) [circle,draw=black,fill=black,scale=0.5]{};
\node at (-5,2.45) [circle,draw=black,fill=black,scale=0.5]{};
\node at (-5.5,2.6) [circle,draw=black,fill=black,scale=0.5]{};
\node at (-6,3) [circle,draw=black,fill=black,scale=0.5]{};
\node at (1,2) [circle,draw=black,fill=black,scale=0.5]{};
\node at (1.5,2) [circle,draw=black,fill=black,scale=0.5]{};
\node at (2,2.05) [circle,draw=black,fill=black,scale=0.5]{};
\node at (2.5,2.1) [circle,draw=black,fill=black,scale=0.5]{};
\node at (3,2.15) [circle,draw=black,fill=black,scale=0.5]{};
\node at (3.5,2.2) [circle,draw=black,fill=black,scale=0.5]{};
\node at (4,2.25) [circle,draw=black,fill=black,scale=0.5]{};
\node at (4.5,2.35) [circle,draw=black,fill=black,scale=0.5]{};
\node at (5,2.45) [circle,draw=black,fill=black,scale=0.5]{};
\node at (5.5,2.6) [circle,draw=black,fill=black,scale=0.5]{};
\node at (6,3) [circle,draw=black,fill=black,scale=0.5]{};

\node at (0.5,4) [circle,draw=black,fill=black,scale=0.5]{};
\node at (0,4) [circle,draw=black,fill=black,scale=0.5]{};
\node at (-0.5,4) [circle,draw=black,fill=black,scale=0.5]{};
\node at (-1,4) [circle,draw=black,fill=black,scale=0.5]{};
\node at (-1.5,4) [circle,draw=black,fill=black,scale=0.5]{};
\node at (-2,3.95) [circle,draw=black,fill=black,scale=0.5]{};
\node at (-2.5,3.9) [circle,draw=black,fill=black,scale=0.5]{};
\node at (-3,3.85) [circle,draw=black,fill=black,scale=0.5]{};
\node at (-3.5,3.8) [circle,draw=black,fill=black,scale=0.5]{};
\node at (-4,3.75) [circle,draw=black,fill=black,scale=0.5]{};
\node at (-4.5,3.65) [circle,draw=black,fill=black,scale=0.5]{};
\node at (-5,3.55) [circle,draw=black,fill=black,scale=0.5]{};
\node at (-5.5,3.4) [circle,draw=black,fill=black,scale=0.5]{};
\node at (1,4) [circle,draw=black,fill=black,scale=0.5]{};
\node at (1.5,4) [circle,draw=black,fill=black,scale=0.5]{};
\node at (2,3.95) [circle,draw=black,fill=black,scale=0.5]{};
\node at (2.5,3.9) [circle,draw=black,fill=black,scale=0.5]{};
\node at (3,3.85)  [rectangle,draw=black,fill=red,scale=0.7]{};
\node at (3.5,3.8) [circle,draw=black,fill=black,scale=0.5]{};
\node at (4,3.75) [circle,draw=black,fill=black,scale=0.5]{};
\node at (4.5,3.65) [circle,draw=black,fill=black,scale=0.5]{};
\node at (5,3.55) [circle,draw=black,fill=black,scale=0.5]{};
\node at (5.5,3.4) [circle,draw=black,fill=black,scale=0.5]{};

\node at (0.5,2-3) [circle,draw=black,fill=black,scale=0.5]{};
\node at (0,2-3) [circle,draw=black,fill=black,scale=0.5]{};
\node at (-0.5,2-3) [circle,draw=black,fill=black,scale=0.5]{};
\node at (-1,2-3) [circle,draw=black,fill=black,scale=0.5]{};
\node at (-1.5,2-3) [circle,draw=black,fill=black,scale=0.5]{};
\node at (-2,2.05-3) [circle,draw=black,fill=black,scale=0.5]{};
\node at (-2.5,2.1-3) [circle,draw=black,fill=black,scale=0.5]{};
\node at (-3,2.15-3) [circle,draw=black,fill=black,scale=0.5]{};
\node at (-3.5,2.2-3) [circle,draw=black,fill=black,scale=0.5]{};
\node at (-4,2.25-3) [circle,draw=black,fill=black,scale=0.5]{};
\node at (-4.5,2.35-3) [circle,draw=black,fill=black,scale=0.5]{};
\node at (-5,2.45-3) [circle,draw=black,fill=black,scale=0.5]{};
\node at (-5.5,2.6-3) [circle,draw=black,fill=black,scale=0.5]{};
\node at (-6,3-3) [circle,draw=black,fill=black,scale=0.5]{};
\node at (1,2-3) [circle,draw=black,fill=black,scale=0.5]{};
\node at (1.5,2-3) [circle,draw=black,fill=black,scale=0.5]{};
\node at (2,2.05-3) [circle,draw=black,fill=black,scale=0.5]{};
\node at (2.5,2.1-3) [circle,draw=black,fill=black,scale=0.5]{};
\node at (3,2.15-3) [circle,draw=black,fill=black,scale=0.5]{};
\node at (3.5,2.2-3) [circle,draw=black,fill=black,scale=0.5]{};
\node at (4,2.25-3) [circle,draw=black,fill=black,scale=0.5]{};
\node at (4.5,2.35-3) [circle,draw=black,fill=black,scale=0.5]{};
\node at (5,2.45-3) [circle,draw=black,fill=black,scale=0.5]{};
\node at (5.5,2.6-3) [circle,draw=black,fill=black,scale=0.5]{};
\node at (6,3-3) [rectangle,draw=black,fill=red,scale=0.7]{};

\node at (0.5,4-3) [circle,draw=black,fill=black,scale=0.5]{};
\node at (0,4-3) [circle,draw=black,fill=black,scale=0.5]{};
\node at (-0.5,4-3) [circle,draw=black,fill=black,scale=0.5]{};
\node at (-1,4-3) [circle,draw=black,fill=black,scale=0.5]{};
\node at (-1.5,4-3) [circle,draw=black,fill=black,scale=0.5]{};
\node at (-2,3.95-3) [circle,draw=black,fill=black,scale=0.5]{};
\node at (-2.5,3.9-3) [regular polygon,regular polygon sides=3,draw=black,fill=green,scale=0.4]{};
\node at (-3,3.85-3) [circle,draw=black,fill=black,scale=0.5]{};
\node at (-3.5,3.8-3) [circle,draw=black,fill=black,scale=0.5]{};
\node at (-4,3.75-3) [circle,draw=black,fill=black,scale=0.5]{};
\node at (-4.5,3.65-3) [circle,draw=black,fill=black,scale=0.5]{};
\node at (-5,3.55-3) [circle,draw=black,fill=black,scale=0.5]{};
\node at (-5.5,3.4-3) [circle,draw=black,fill=black,scale=0.5]{};
\node at (1,4-3) [circle,draw=black,fill=black,scale=0.5]{};
\node at (1.5,4-3) [circle,draw=black,fill=black,scale=0.5]{};
\node at (2,3.95-3) [circle,draw=black,fill=black,scale=0.5]{};
\node at (2.5,3.9-3) [circle,draw=black,fill=black,scale=0.5]{};
\node at (3,3.85-3) [circle,draw=black,fill=black,scale=0.5]{};
\node at (3.5,3.8-3) [circle,draw=black,fill=black,scale=0.5]{};
\node at (4,3.75-3) [circle,draw=black,fill=black,scale=0.5]{};
\node at (4.5,3.65-3) [circle,draw=black,fill=black,scale=0.5]{};
\node at (5,3.55-3) [circle,draw=black,fill=black,scale=0.5]{};
\node at (5.5,3.4-3) [circle,draw=black,fill=black,scale=0.5]{};

\node at (5.5,-1.1)[circle,draw=black,fill=black,scale=0.5]{};
\node at (2.5,4.4) [circle,draw=black,fill=black,scale=0.5]{};

\node at (-5.1,3.9) {$x'_i$};
\node at (0,3) {$B_i$};
\node at (5.5,3.7) {$y'_i$};
\node at (-5.5,0.7) {$x'_{i+1}$};
\node at (0.3,0) {$B_{i+1}$};
\node at (6.5,0) {$y'_{i+1}$};

\node at (5.5,-1.4) {\Huge{$\ldots$}};
\node at (2.5,4.6) {\Huge{$\ldots$}};

\end{tikzpicture}
\caption{\label{figadd44} Example of two blocks $B_i$ and $B_{i+1}$ and the paths considered in the proof of Theorem \ref{thm path structure} in these  blocks (thick line: path  $P$; dashed lines: paths $Q_{2i-1}$, $Q_{2i}$, $Q_{2i+1}$ and $Q_{2i+2}$).}
\end{center}
\end{figure}
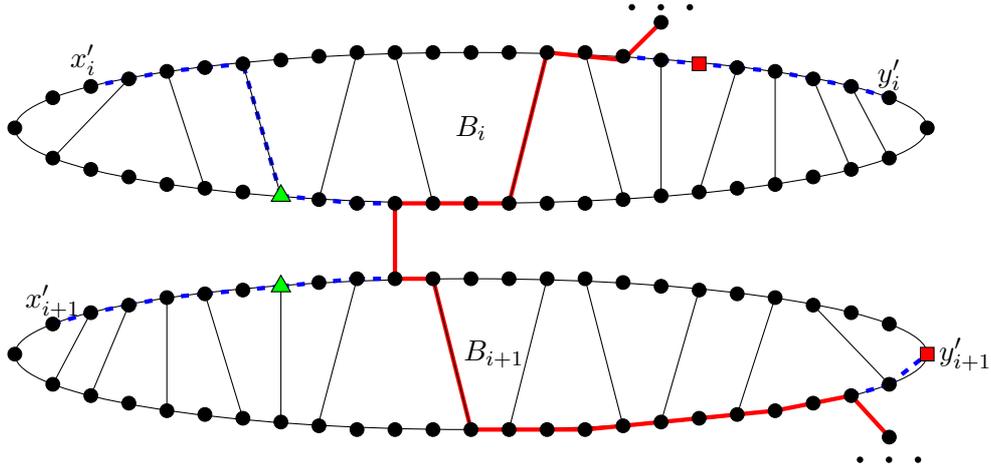

\medskip

Now we colour the paths $Q_j$, $j=1,2,\dots, \, 2k$. The distance from $P$ to any vertex of $Q_j$ in $G$ is the same as on $Q_j$, hence each $Q_j$ is a shortest path in $G$ between $P$ and the relevant vertex $x'_i$ or $y'_i$, respectively. Thus, by Lemma \ref{out2}, we can colour the vertices of each path $Q_{2i-1}$ ($i=1,2,\dots, k$) with Pattern~\eqref{pat5-15}, starting at the vertex of $Q_{2i-1}$ at distance two from $P$ (square vertices in Fig. \ref{figadd44}). Analogously we can colour the vertices of the paths $Q_{2i}$ ($i=1,2,\dots, k$) using Pattern~\eqref{pat5-15}, starting at the vertex of $Q_{2i}$ at distance three from $P$ (triangle vertices in Fig. \ref{figadd44}). 
Then the distance between vertices on distinct paths $Q_m$, $Q_n$ coloured with colour $5$ is at least $3+1+2$, the distance between vertices on distinct paths $Q_m$, $Q_n$ coloured with colour $6$ is at least $4+1+3$, etc. Therefore the defined colouring of the paths $Q_j$, $j=1,\dots, 2k$ satisfies the distance constraints of a packing colouring.

\medskip

Now we colour the remaining vertices of $G$. We start with colouring of the path $P$ with a pattern using colours $16,17,\dots, 50$ by Proposition \ref{preliminaries}.ii). Then we colour all uncoloured vertices of $Q_j$ ($j=1,\dots, 2k$) at distance one from $P$ with colours $51, 52, \dots, 152$ by Proposition \ref{preliminaries}.iii).
 For the remaining vertices of $Q_{2i}$ at distance two from $P$, the distance between any such vertices on $Q_{2m}$ and $Q_{2n}$ ($m,n\in \{ 1,2,\dots, k-1 \}$, $m\neq n$) is at least $2\vert m-n\vert +4$.  Hence we can colour these vertices with colours $153, \dots, 305$ by Proposition \ref{pathgap}. 
\end{proof}

\section{Concluding remarks}

In the previous sections, we have determined some classes of outerplanar graphs with finite packing chromatic number. As for lower bounds, we are (only) able to state the following, where symbols $\square$ and $\boxtimes$ denote the Cartesian and strong product of graph, respectively (see~\cite{kla}).

\begin{proposition}
There exists an infinite family of $2$-connected subcubic outerplanar graphs without internal faces and with packing chromatic number $5$.
\end{proposition}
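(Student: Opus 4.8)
The plan is to take as the infinite family the subcubic ladders $L_n = P_n \,\square\, P_2$ (with $P_2=K_2$) for all sufficiently large $n$. Each $L_n$ is $2$-connected and has maximum degree $3$, and it is outerplanar: drawing the two copies of $P_n$ as a top and a bottom horizontal path and the rungs as vertical segments places every vertex on the boundary cycle, the rungs being non-crossing chords. Its bounded faces are the $n-1$ quadrilaterals arranged in a row, each carrying exactly two chords, so the weak dual is the path $P_{n-1}$ and $L_n$ has no internal face. Hence the whole family lies in the required class, and by Theorem~\ref{out1} we already have $\chi_{\rho}(L_n)\le 15$; the point is to pin the value to $5$.

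For the upper bound I would exhibit one explicit periodic packing $5$-colouring working uniformly in $n$. Index the columns by $i$ and write $t_i,b_i$ for the top and bottom vertex of column $i$. Put colour $1$ on the diagonal independent set $\{t_i : i \text{ even}\}\cup\{b_i : i \text{ odd}\}$; being independent, it gives no clash in colour $1$. On the complementary diagonal, whose $i$-th vertex I call $r_i$ (so $r_i=b_i$ for $i$ even and $r_i=t_i$ for $i$ odd), the distance between $r_i$ and $r_j$ is $2\lceil |i-j|/2\rceil$. Colouring $r_i$ by $2$ if $i\equiv 0\pmod 3$, by $3$ if $i\equiv 1\pmod 3$, by $4$ if $i\equiv 2\pmod 6$, and by $5$ if $i\equiv 5\pmod 6$, then meets all packing constraints, since colours $2,3$ land on indices at least $3$ apart and colours $4,5$ on indices at least $6$ apart. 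As distances in $L_n$ agree with those in the two-way infinite ladder, restricting this colouring yields $\chi_{\rho}(L_n)\le 5$ for every $n$.

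The lower bound $\chi_{\rho}(L_n)\ge 5$ for large $n$ is where the work lies: I would show that $L_n$ cannot be packing-$4$-coloured once $n$ is large. Measuring distances by columns, a colour-$i$ class has density at most $\tfrac12,\tfrac14,\tfrac16,\tfrac18$ for $i=1,2,3,4$, because colour $1$ is independent and, for each $i$, any two vertices inside $i$ consecutive columns lie at distance at most $i$, so a colour-$i$ class meets every block of $i$ consecutive columns in at most one vertex. These densities sum to $\tfrac{25}{24}$, leaving essentially no slack, and I would exploit the resulting rigidity. Call a column a $12$-column if both its vertices use colours from $\{1,2\}$; such a column must carry one $1$ and one $2$, and two adjacent $12$-columns would place two colour-$2$ vertices at distance $\le 2$, which is forbidden. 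Hence $12$-columns are pairwise non-adjacent, so at least half the columns each contain a vertex of colour $3$ or $4$. Comparing this with the densities of colours $3$ and $4$ forces, for large $n$, a rigid periodic pattern in which colour-$3$ vertices must occupy columns only two apart, contradicting the distance-$\ge 4$ requirement for colour $3$.

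The main obstacle is exactly this final step: converting the near-tight density count into a genuine contradiction. Since the slack is only $\tfrac{1}{24}$, the argument cannot be purely numerical; it must follow how the colour-$1$ and colour-$2$ classes interlock column by column and show that the forced alternation of $12$-columns with columns carrying a $3$ or a $4$ cannot be reconciled with the spacing demanded by colours $3$ and $4$. I expect this to need either a short discharging argument over a window of a few consecutive columns or an explicit finite case analysis establishing that a bounded sub-pattern of $L_n$ is already un-$4$-colourable, which then propagates to all large $n$ and completes the proof that $\chi_{\rho}(L_n)=5$.
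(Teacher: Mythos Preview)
Your choice of family, the ladders $L_n=P_n\,\square\,P_2$, is exactly the family the paper uses. The paper's proof, however, is a one-line citation: Goddard et al.\ already established that $\chi_{\rho}(P_n\,\square\,P_2)=5$ for all $n\ge 6$, and the proposition simply invokes this together with the (easy) observation that these graphs are $2$-connected subcubic outerplanar with weak dual a path.

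Your upper-bound construction is correct: the diagonal colour-$1$ set is independent, your distance formula $d(r_i,r_j)=2\lceil|i-j|/2\rceil$ on the complementary diagonal is right, and the residues you assign give pairwise distances at least $4,4,6,6$ for colours $2,3,4,5$ respectively, which suffices. The lower bound, on the other hand, you leave genuinely open. Your density count and the observation about ``$12$-columns'' are sound, but the step ``forces a rigid periodic pattern in which colour-$3$ vertices must occupy columns only two apart'' is not established: the inequalities $\lfloor n/2\rfloor \le \lceil n/3\rceil+\lceil n/4\rceil$ are compatible, so the contradiction does not fall out of counting alone, and the promised local case analysis or discharging argument is never carried out. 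Since the paper treats $\chi_{\rho}(P_n\,\square\,P_2)=5$ as a known result, the cleanest fix is to cite it; if you insist on a self-contained argument, you must actually perform the finite window analysis you describe (for instance, showing explicitly that some short $L_m$ already has no packing $4$-colouring).
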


\begin{proof}
It has been proven in \cite{God} that $\chi_{\rho}(G)=5$ for $G=P_n\square P_2$ and $n\ge6$.
\end{proof}

\begin{remark}
The graph $G$ illustrated in Figure \ref{lowerbound} is a $2$-connected subcubic outerplanar graph with packing chromatic number $7$. We verified by computer that every proper colouring of $G$ with $6$ colours is not a packing colouring and we found a packing colouring of $G$ with $7$ colours.
\end{remark}

\begin{figure}[t]

\begin{center}

\begin{tikzpicture}[scale=0.6]
\draw  (-2,0) -- (2,0);
\draw  (-2,0) -- (-1,2);
\draw  (2,0) -- (1,2);
\draw  (-1,2) -- (1,2);
\draw (-1,0) -- (-1.5,1);
\draw (1,0) -- (1.5,1);
\draw (-1,2) -- (-2,3);
\draw (-2,5) -- (-2,3);
\draw (-2,5) -- (-1,6);
\draw (1,6) -- (-1,6);
\draw (1,6) -- (2,5);
\draw (2,3) -- (2,5);
\draw (2,3) -- (1,2);
\draw (-2,3) -- (-4,2);
\draw (-2,5) -- (-4,6);
\draw (-4,2) -- (-4,6);
\draw (-4,3) -- (-3,2.5);
\draw (-4,5) -- (-3,5.5);
\draw (2,3) -- (4,2);
\draw (2,5) -- (4,6);
\draw (4,2) -- (4,6);
\draw (4,3) -- (3,2.5);
\draw (4,5) -- (3,5.5);
\draw (1,6) -- (2,8);
\draw (-1,6) -- (-2,8);
\draw (-2,8) -- (2,8);
\draw (-1,8) -- (-1.5,7);
\draw (1,8) -- (1.5,7);

\node at (2,0) [circle,draw=black,fill=black,scale=0.5]{};
\node at (1.5,1) [circle,draw=black,fill=black,scale=0.5]{};
\node at (1,2) [circle,draw=black,fill=black,scale=0.5]{};
\node at (-2,0) [circle,draw=black,fill=black,scale=0.5]{};
\node at (-1,0) [circle,draw=black,fill=black,scale=0.5]{};
\node at (1,0) [circle,draw=black,fill=black,scale=0.5]{};
\node at (-1.5,1) [circle,draw=black,fill=black,scale=0.5]{};
\node at (-1,2) [circle,draw=black,fill=black,scale=0.5]{};
\node at (2,8) [circle,draw=black,fill=black,scale=0.5]{};
\node at (1.5,7) [circle,draw=black,fill=black,scale=0.5]{};
\node at (1,6) [circle,draw=black,fill=black,scale=0.5]{};
\node at (-2,8) [circle,draw=black,fill=black,scale=0.5]{};
\node at (-1,8) [circle,draw=black,fill=black,scale=0.5]{};
\node at (1,8) [circle,draw=black,fill=black,scale=0.5]{};
\node at (-1.5,7) [circle,draw=black,fill=black,scale=0.5]{};
\node at (-1,6) [circle,draw=black,fill=black,scale=0.5]{};
\node at (2,3) [circle,draw=black,fill=black,scale=0.5]{};
\node at (2,5) [circle,draw=black,fill=black,scale=0.5]{};
\node at (4,2) [circle,draw=black,fill=black,scale=0.5]{};
\node at (4,6) [circle,draw=black,fill=black,scale=0.5]{};
\node at (4,3) [circle,draw=black,fill=black,scale=0.5]{};
\node at (4,5) [circle,draw=black,fill=black,scale=0.5]{};
\node at (3,5.5) [circle,draw=black,fill=black,scale=0.5]{};
\node at (3,2.5) [circle,draw=black,fill=black,scale=0.5]{};
\node at (-2,3) [circle,draw=black,fill=black,scale=0.5]{};
\node at (-2,5) [circle,draw=black,fill=black,scale=0.5]{};
\node at (-4,2) [circle,draw=black,fill=black,scale=0.5]{};
\node at (-4,6) [circle,draw=black,fill=black,scale=0.5]{};
\node at (-4,3) [circle,draw=black,fill=black,scale=0.5]{};
\node at (-4,5) [circle,draw=black,fill=black,scale=0.5]{};
\node at (-3,5.5) [circle,draw=black,fill=black,scale=0.5]{};
\node at (-3,2.5) [circle,draw=black,fill=black,scale=0.5]{};
\end{tikzpicture}
\end{center}
\caption{An outerplanar subcubic graph with packing chromatic number 7.}
\label{lowerbound}
\end{figure}
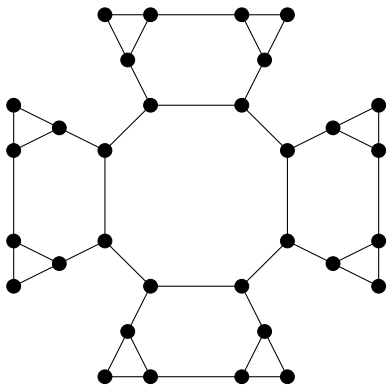

Bre{\v{s}}ar et al. \cite{BrePa} have proven that for any finite graph $G$, the graph $G\boxtimes P_{\infty}$ has finite packing chromatic number. The degree of $G\boxtimes P_{\infty}$ can be arbitrary large. This property illustrates the fact that the  degree of a graph is not the only parameter to consider in order to have finite packing chromatic number. Maybe the fact that the weak dual is a path (and is not any tree) helps to bound the packing chromatic number. It remains an open question to determine if the packing chromatic number of subcubic outerplanar graphs is finite or not.

\section*{Acknowledgments} 
The authors thank the referees for their judicious comments and Mahmoud Omidvar for his precious help, in particular for providing Pattern~\eqref{pat5-15} of proof of Theorem \ref{out1}.
First author was partly supported by the Burgundy Council under grant \#CRB2011-9201AAO048S05587. Second author was partly supported by project P202/12/G061 of the Czech Science Foundation.

\end{document}